\newtheorem{theorem}{Theorem}
\newtheorem{proposition}{Proposition}
\newtheorem{lemma}[theorem]{Lemma}
\newtheorem{corollary}{Corollary}
\theoremstyle{definition}
\newtheorem{definition}{Definition}
\theoremstyle{definition}
\newtheorem*{Proof}{Proof}
\newcommand{\ket}[1]{|#1\rangle} 
\newcommand{\bra}[1]{\langle#1|} 
\newcommand{\ketbra}[2]{|#1\rangle\langle#2|} 
\newcommand{\braket}[2]{\langle #1 \vert #2 \rangle}
\DeclareMathOperator{\tr}{Tr}
\newcommand{\id}{{\rm{id}}}
\newcommand{\cC}{{\mathcal{C}}}
\newcommand{\cD}{{\mathcal{D}}}
\newcommand{\cE}{{\mathcal{E}}}
\newcommand{\cF}{{\mathcal{F}}}
\newcommand{\cG}{{\mathcal{G}}}
\newcommand{\cH}{{\mathcal{H}}}
\newcommand{\cM}{{\mathcal{M}}}
\newcommand{\cN}{{\mathcal{N}}}
\newcommand{\cQ}{{\mathcal{Q}}}
\newcommand{\cR}{{\mathcal{R}}}
\newcommand{\cT}{{\mathcal{T}}}
\newcommand{\cU}{{\mathcal{U}}}
\newcommand{\cV}{{\mathcal{V}}}
\newcommand{\cW}{{\mathcal{W}}}
\newcommand{\mbb}[1]{\mathbb{#1}}
\newcommand{\EH}{\mbb{E}_{U \sim {\sf H}}}
\def\lsim{\mathrel{\rlap{\lower4pt\hbox{\hskip1pt$\sim$}}
		\raise1pt\hbox{$<$}}}                
\def\gsim{\mathrel{\rlap{\lower4pt\hbox{\hskip1pt$\sim$}}
		\raise1pt\hbox{$>$}}}                
\xpatchcmd\thmt@restatable{%
\csname #2\@xa\endcsname\ifx\@nx#1\@nx\else[{#1}]\fi
}{%
\ifthmt@thisistheone
\csname #2\@xa\endcsname\ifx\@nx#1\@nx\else[{#1}]\fi
\else
\csname #2\@xa\endcsname[{restated}]
\fi}{}{}
\begin{document}

        \begin{titlepage}
\thispagestyle{empty}

\begin{flushright}
YITP-22-120
\end{flushright}

\title{Decoding general error correcting codes and the role of complementarity}

\author{Yoshifumi Nakata}
\affiliation{Yukawa Institute for Theoretical Physics, Kyoto University, Oiwake-cho, Kitashirakawa, Sakyo-ku, Kyoto, 606-8502, Japan.}
\affiliation{Photon Science Center, Graduate School of Engineering, The University of Tokyo, Bunkyo-ku, Tokyo 113-8656, Japan}
\author{Takaya Matsuura}
\affiliation{RIKEN Center for Quantum Computing (RQC), Hirosawa  2-1, Wako, Saitama, 351-0198, Japan}
\affiliation{Centre for Quantum Computation \& Communication Technology,
School of Science, RMIT University, Melbourne, VIC 3000, Australia}
\affiliation{Department of Applied Physics, Graduate School of Engineering, The University of Tokyo, 7-3-1 Hongo, Bunkyo-ku, Tokyo 113-8656, Japan}
\author{Masato Koashi}
\affiliation{Photon Science Center, Graduate School of Engineering, The University of Tokyo, Bunkyo-ku, Tokyo 113-8656, Japan}
\affiliation{Department of Applied Physics, Graduate School of Engineering, The University of Tokyo, 7-3-1 Hongo, Bunkyo-ku, Tokyo 113-8656, Japan}
	
\begin{abstract}
Among various classes of quantum error correcting codes (QECCs), non-stabilizer codes have rich properties and are of theoretical and practical interest. Decoding non-stabilizer codes is, however, a highly non-trivial task. In this paper, we show that a decoding circuit for Calderbank-Shor-Steane (CSS) codes can be straightforwardly extended to handle general QECCs. 
The key to the extension lies in the use of a pair of classical-quantum (CQ) codes associated with the QECC to be decoded.
The decoding error of the proposed decoding circuit depends on the classical decoding errors of the CQ codes and their degree of complementarity. We demonstrate the power of the decoding circuit in a toy model of the black hole information paradox, improving decoding errors compared to previous results. In addition, we reveal that black hole dynamics may optimally encode quantum information but poorly encode classical information.
\end{abstract}

\maketitle
\end{titlepage}

\section{Introduction}
Reversing the effects of noise and recovering quantum information from a noisy quantum system are central to large-scale quantum information processing and also offer significant insight into fundamental physics~\cite{PSW2006,dRHRW2016,HP2007,SS2008,LSHOH2013,AH2015, PYHP2015,DHW2016,D2017,HQRY2016,RY2017,NWK2020}.
A common method to this end is quantum error correction (QEC), where quantum information is encoded into a system in such a way that it is decodable even after the system experiences noise.
The standard class of quantum error correcting codes (QECCs) is stabilizer codes, but there has been growing interest in QECCs beyond stabilizer codes due to their higher encoding rates, which can be capacity-achieving~\cite{L1997, S2002, D2005, BSST1999, BSST2002}. 
Moreover, non-stabilizer codes are significant in theoretical physics, as they play an important role in the recent studies of quantum many-body systems from the perspective of quantum information theory~\cite{HP2007,SS2008,LSHOH2013}.

An important class of non-stabilizer codes is random codes. They are originally introduced as an analytical tool for investigating the achievability of the quantum capacity. However, recent advances in the unitary design theory~\cite{DBWR2010,SDTR2013,BF2013,NHMW2017,WN2021,WN2020,NWY2021,GKHJF2021,DNTY2024} and quantum technologies have opened up the possibility of the practical use of random codes.
They also have numerous applications in theoretical physics, including the foundations of statistical mechanics~\cite{PSW2006,dRHRW2016}, scrambling dynamics in complex many-body systems~\cite{HP2007,SS2008,LSHOH2013}, and the study of quantum black holes~\cite{HQRY2016,RY2017,NWK2020}.

One of the obstacles in exploring non-stabilizer codes is the lack of explicit and efficient decoders.
For stabilizer codes, a decoder can be constructed by correcting errors based on syndrome measurements, followed by the application of the inverse of the encoding unitary.
This approach cannot be naively extended to non-stabilizer codes since measurements that preserve logical information are highly non-trivial, making it unclear how to perform syndrome measurements for non-stabilizer codes. Consequently, non-stabilizer codes are typically studied by the so-called \emph{decoupling} approach~\cite{HP2007, D2005,DW2004,DBWR2010}, which allows one to analyze code performance without explicitly constructing a decoder.

A few explicit decoders that are applicable to non-stabilizer codes have been proposed, though none are efficient. One such decoder is based on the Petz map~\cite{P1986}, which, despite its high computational cost~\cite{GLMQW2022}, is known to be useful for decoding general QECCs~\cite{BK2002,BDL2016}.
Another approach leverages an intrinsic relation between classical and quantum information~\cite{K2007, BR2009, T2010, Renes2016}, in which classical-quantum (CQ) codes associated to the QECC to be decoded play a central role. 
A CQ code, designed to protect only classical information from quantum noise, is obtained from any QECC by restricting its input to basis states, making the code basis-dependent, and by decoding it through quantum measurement.
In this approach, decoding measurements of two CQ codes associated to a given QECC, one defined by an arbitrary basis and the other by a complementary basis, are combined to construct a decoder for the original QECC.

In this paper, we explore the latter approach and show that, despite the lack of a priori stabilizer structures, a decoding quantum circuit for general QECCs can be constructed in a manner similar to stabilizer codes. 
To do so, we first provide a decoding circuit for a Calderbank-Shor-Steane (CSS) code~\cite{CS1996,S1996} by combining one-bit teleportation~\cite{ZLC2000} with the classical decoders that the CSS code inherently possesses. 
We then extend this circuit to general QECCs by replacing the classical decoders with two decoding measurements of the CQ codes associated with the QECC to be decoded, thereby making the circuit applicable to general QECCs.
We refer to the extended decoding circuit as a \emph{Classical-to-Quantum (C-to-Q) decoder} and show that the decoding error is determined by the decoding errors of the CQ codes and the degree of complementarity of the two bases of the CQ codes.
We also show that, with a suitable choice of decoding measurements of the CQ codes, the C-to-Q decoder is nearly optimal and can be used for decoding capacity-achieving QECCs.

We further demonstrate the power of the C-to-Q decoder by applying it to the Hayden-Preskill protocol~\cite{HP2007}, a toy model for the black hole information paradox that has been experimentally realized~\cite{LFSLYYM2019}.
By combining the C-to-Q decoder with a slightly modified pretty-good-measurement (PGM)~\cite{HW1994}, we compute the decoding errors for both classical and quantum information. As a result, we find that the C-to-Q decoder achieves the quantum capacity of the protocol and improves upon previous analyses based on the decoupling approach. 
Furthermore, we show that the black hole dynamics optimally encode quantum information but poorly encode classical information.

The C-to-Q decoder offers a wide range of advantages, both theoretical and practical. Theoretically, it not only contributes to the development of explicit decoders for general QECCs but also quantitatively reveals the fundamental role of complementarity in decoding QECCs. On the practical side, the C-to-Q decoder has multiple applications.
Even for stabilizer codes, the C-to-Q decoder can handle less restrictive scenarios, as it does not rely on syndrome measurements that must be fully commutative. Additionally, the C-to-Q decoder can be used for experimentally benchmarking the performance of a QECC by evaluating the corresponding CQ codes, simplifying the process of performance evaluation.
Moreover, the C-to-Q decoder is beneficial for simultaneously achieving error correction and code switching, which has promising applications in quantum communication, as will be discussed later.

This paper is organized as follows. We begin with preliminaries in~\cref{S:SettingResults}. All results are summarized in~\cref{S:MR}. Proofs of the main statements about the C-to-Q decoder are provided in~\cref{S:CtoQProof}. An in-depth analysis of the Hayden-Preskill protocol with the C-to-Q decoder is presented in~\cref{S:ErrorHP}.
Proofs of technical statements are given in Appendices after the summary and discussion in~\cref{S:COs}.

\section{Preliminaries} \label{S:SettingResults}

\subsection{Notation}

Throughout the paper, we use superscripts to refer to the relevant systems, such as a Hilbert space $\cH^A$ of a system $A$, an operator $X^{AB}$ acting on $AB$, and a superoperator $\cE^{A \rightarrow B}$ from $A$ to $B$. 
A superoperator from $A$ to itself is denoted by ${\cal E}^A$. The superscript is sometimes omitted when it is clear from the context.

When we say a basis of a Hilbert space, we always mean orthonormal ones.
Let $\{ \ket{e_j} \}_j$ be a basis in $A$, and $A'$ be the system isomorphic to $A$. The state 
$\ket{\Phi}^{AA'} : = (\dim \cH^A)^{-1/2} \sum_j \ket{e_j}^A \otimes \ket{e_j}^{A'}$ is called a maximally entangled state. In terms of another basis $\{ \ket{f_j} \}_j$, the state $\ket{\Phi}^{AA'}$ is given by
\begin{equation}
\ket{\Phi}^{AA'} = (\dim \cH^A)^{-1/2} \sum_j \ket{f_j}^A \otimes \ket{f_j^*}^{A'},
\end{equation}
where $\ket{f_j^*}^{A'} = \sqrt{\dim \cH^A}\bra{f_j}^A \ket{\Phi}^{AA'}$. The corresponding density matrix $\ketbra{\Phi}{\Phi}^{AA'}$ is denoted by $\Phi^{AA'}$.
We denote the completely mixed state on $A$ by $\pi^A := I^A/\dim \cH^A$, where $I^A$ is the identity operator on $A$.

The Schatten $p$-norm for a linear operator $X$ is defined by $\| X \|_p := ( \tr [ (X^{\dagger}X)^{p/2}] )^{1/p}$ ($p \in [1, \infty]$). We particularly use the trace ($p=1$), and operator ($p = \infty$) norms. The fidelity between quantum states $\rho$ and $\sigma$ is defined by $F(\rho, \sigma) := \| \sqrt{\rho} \sqrt{\sigma} \|_1^2$. The fidelity and the trace norm are related to each other by the Fuchs–van de Graaf inequalities as
\begin{equation}
    1 - \sqrt{F(\rho, \sigma)} \leq \frac{1}{2} \bigl\| \rho - \sigma \bigr\|_1 \leq \sqrt{ 1 - F(\rho, \sigma)}. \label{FvG}
\end{equation}
For any state $\rho$, which may be defined on a composite system including $A$, the collision entropy $H_2(A)_{\rho}$ is defined by
\begin{equation}
    H_2(A)_{\rho} = -\log \bigl[ \tr (\rho^A)^2 \bigr].
\end{equation}

Important classes of dynamics in quantum systems are \emph{unitary}, \emph{isometry}, and \emph{partial isometry}. An isometry $V^{A \rightarrow A'}$ from a system $A$ to $A'$ ($\dim \cH^A \leq \dim \cH^{A'}$) is a linear operator such that $V^{A \rightarrow A'\dagger} V^{A \rightarrow A'} = I^{A}$. When $\dim \cH^A = \dim \cH^{A'}$, it holds that $V^{A \rightarrow A'} V^{A \rightarrow A' \dagger} = I^{A}$ as well, so the isometry is unitary. 
A linear operator $W^{A \rightarrow A'}$ is a partial isometry if both $W^{A \rightarrow A' \dagger} W^{A \rightarrow A'}$ and $W^{A \rightarrow A'} W^{A \rightarrow A
' \dagger}$ are projections, the former on $A$ and the latter on $A'$.
Projections, isometries, and unitaries are special classes of partial isometries.

A \emph{quantum channel} $\cT^{A \rightarrow B}$ is a completely-positive (CP) and trace-preserving (TP) map. A map is called CP if $({\rm id}^{A'} \otimes \cT^{A \rightarrow B}) (\rho^{A'A}) \geq 0$ for any $\rho^{A'A} \geq 0$, where ${\rm id}$ is the identity map, and is TP if $\tr[\cT^{A \rightarrow B} (\rho^{A})] = \tr[\rho^A]$. 
For simplicity, we denote $({\rm id}^{A'} \otimes \cT^{A \rightarrow B} ) (\rho^{A'A})$ by $\cT^{A \rightarrow B}(\rho^{A'A})$.
For a given partial isometry, we sometimes denote the conjugating map by its calligraphic style, such as $\cV(\rho) := V \rho V^{\dagger}$ for a partial isometry $V$.
A fundamental quantum channel is the partial trace over a subsystem that is indicated by a subscript, such as $\tr_C$. In this paper, a partial trace is often implicit in the sense that $\rho^A$ denotes the reduced operator on $A$ of $\rho^{AB}$, i.e., $\rho^A = \tr_B [\rho^{AB}]$.

For a superoperator $\cM^{A \rightarrow B}$, the diamond norm is defined as 
\begin{equation}
    \| \cM^{A \rightarrow B} \|_{\diamond} := \max_{\cH^R}\max_{O^{AR} (\neq 0)} \frac{\| \cM^{A \rightarrow B} \otimes \id^R(O^{AR}) \|_1}{\| O^{AR} \|_1},
\end{equation}
where $\max_{\cH^R}$ is taken over a Hilbert space $\cH^R$ with arbitrary finite dimension, $\id^R$ is the identity map on $R$, and $O^{AR}$ is any operator on $\cH^A \otimes \cH^R$. It suffices to take the dimension of $R$ at most the dimension of $A$~\cite{Watrous2018}.
Furthermore, when the map is Hermitian-preserving, the diamond norm is given by
\begin{equation}
    \| \cM^{A \rightarrow B} \|_{\diamond} = \max_{\cH^R}\max_{\ket{\psi}^{AR}} \| \cM^{A \rightarrow B} \otimes \id^R(\ketbra{\psi}{\psi}^{AR}) \|_1,\label{Def:DiamondNorm}
\end{equation}
where $\max_{\ket{\psi}^{AR}}$ is taken over all pure states in $\cH^{AR}$~\cite{Watrous2018}.

The Haar measure ${\sf H}$ on a unitary group $\mathfrak{U}(d)$ of finite degree $d$ is the unique left- and right- unitarily invariant probability measure. That is, for any measurable set $\cW \subset \mathfrak{U}(d)$ and $V \in \mathfrak{U}(d)$,
\begin{equation}
{\sf H}(V \cW) = {\sf H} (\cW V) = {\sf H}(\cW),
\end{equation}
and ${\sf H}(\mathfrak{U}(d)) = 1$.
When a unitary $U$ is chosen uniformly at random with respect to the Haar measure ${\sf H}$, it is called a Haar random unitary and is denoted by $U \sim {\sf H}$. 
The average of a function $f(U)$ over a Haar random unitary is denoted by $\mathbb{E}_{U \sim {\sf H}}[f(U)]$.

\subsection{Quantum error correcting codes (QECCs) and classical-quantum (CQ) codes}

A QECC for a noisy quantum channel $\cN^{B \rightarrow C}$ is defined by a pair of encoding and decoding quantum channels, $(\cE^{A \rightarrow B}, \cD^{C \rightarrow A})$, that satisfies
\begin{equation}
    \cD^{C \rightarrow A} \circ \cN^{B \rightarrow C} \circ \cE^{A \rightarrow B} \approx {\rm id}^A, \label{Eq:DiamondPrimitive}
\end{equation}
where ${\rm id}^A$ is the identity map on $A$. 
As a QECC is designed to achieve the identity map, it can preserve any quantum state on $A$, which may be correlated with other systems, even if the system $A$ experiences noise represented by $\cN$.  In this sense, a QECC is for protecting quantum information.

In contrast, a CQ code is designed to protect only classical information. A standard formulation of a CQ code for $(\log d)$-bit information is given by a set of $d$ states, $\{\rho_j\}_{j =0}^{d-1}$, and quantum measurement, i.e., positive-operator-valued measure (POVM), with $d$ outcomes $M=\{ M_j \}_{j=0}^{d-1}$. The CQ code works as follows: when the classical information to be protected is $j$, the corresponding quantum state $\rho_j$ is chosen. After the state experiences noise $\cN$, it is decoded by the measurement, which is successful if the measurement outcome is $j$, i.e., if $\tr[M_j \cN(\rho_j)] \approx 1$.

In this paper, as we focus on the connection between QECCs and CQ codes, we formulate the CQ code by explicitly describing the preparation process of $\rho_j$. More concretely, we consider that the state $\rho_j$ is prepared by applying an encoding channel $\cE$ to $\ket{j_W}$, one of the fixed basis states $W=\{ \ket{j_W} \}_{j =0}^{d-1}$, which encodes the $(\log d)$-bit classical information $j$.
The noisy state is, hence, given by
\begin{equation}
    \cN^{B \rightarrow C} \circ \cE^{A \rightarrow B}(\ketbra{j_W}{j_W}^A),
\end{equation}
which is decoded by a decoding measurement $M=\{ M_j^C \}_{j=0}^{d-1}$. The condition on succeeding in decoding the CQ code is as follows:
\begin{equation}
    \tr\bigl[ M_j^C \cN^{B \rightarrow C} \circ \cE^{A \rightarrow B}(\ketbra{j_W}{j_W}^A) \bigr] \approx 1, \label{Eq:CQerrorPrimitive}
\end{equation}
for all $j =0, \dots, d-1$. 

In this way, we formulate a CQ code by a tuple $(W, \cE, M)$, where $W$ is a fixed basis identifying the input state, $\cE$ is an encoding channel, and $M$ is the decoding measurement.
It is important to note that this formulation of the CQ codes depends on the choice of the basis $W$. Various CQ codes can be defined for the same encoding channel $\cE$.

\subsection{Decoding errors}

We here explain how the decoding errors of the QECCs and CQ codes are quantified.

\subsubsection{Decoding error of quantum information}

The decoding error of a QECC $(\cE, \cD)$ for a noisy quantum channel $\cN$ is defined based on~\cref{Eq:DiamondPrimitive}.
A natural way to quantify the approximation is to use the diamond norm:\begin{equation}
    \Delta_{\diamond}(\cD|\cN \circ \cE) := \frac{1}{2} \bigl\| \id^{A} - \cD^{C \rightarrow A} \circ \cN^{B \rightarrow C} \circ \cE^{A \rightarrow B} \bigr\|_{\diamond}.
\end{equation}
Another standard way is to introduce a reference system $R$, where $\dim \cH^A = \dim \cH^R = d$, and to use the maximally entangled state $\Phi^{AR}$:
\begin{equation}
    \Delta_{\rm{q}}(\cD|\cN \circ \cE) := \frac{1}{2} \bigl\| \Phi^{AR} - \cD^{C \rightarrow A} \circ \cN^{B \rightarrow C} \circ \cE^{A \rightarrow B}(\Phi^{AR}) \bigr\|_1. \label{Eq:definitionDecodingErrorQ}
\end{equation}

These two definitions of errors are closely related since they satisfy (see, e.g.,~\cite{KR2021})
\begin{equation}
    \Delta_{\rm{q}}(\cD|\cN \circ \cE) \leq \Delta_{\diamond}(\cD|\cN \circ \cE) \leq d \Delta_{\rm{q}}(\cD|\cN \circ \cE). \label{Eq:TwoErrors}
\end{equation}
Hence, when $\Delta_{\rm{q}}(\cD|\cN \circ \cE)$ is sufficiently small, so is $\Delta_{\diamond}(\cD|\cN \circ \cE)$.
From~\cref{Eq:TwoErrors}, we observe that $\Delta_{\rm{q}}(\cD|\cN \circ \cE)$ and $\Delta_{\diamond}(\cD|\cN \circ \cE)$ can differ by factor $d$, which may be problematic for large $d$. This issue can be circumvented in the context of QEC. In fact, we can improve~\cref{Eq:TwoErrors} by slightly modifying the encoding and decoding operations from the original QECC $(\cE, \cD)$. Here, we provide two methods of doing so.

The first method is to use a random unitary in the encoding and decoding operations. We especially use a unitary $1$-design, which is a random unitary that has the same first order moment as that of a Haar random unitary. A canonical instance of a unitary $1$-design is the multi-qubit Pauli group. Using a unitary $1$-design, the following statement holds. See~\cref{App:MESdiamondUnitary} for the derivation. 

\begin{restatable}{proposition}{MESdiamondUnitary}\label{Prop:MESdiamondUnitary}
    Let $(\cE^{A \rightarrow B}, \cD^{C \rightarrow A})$ be a QECC for a noisy quantum channel $\cN^{B\rightarrow C}$. Let $U^A$ be a unitary $1$-design, and define a new QECC $(\cE_U^{A \rightarrow B}, \cD^{C\rightarrow A}_{U^{\dagger}})$ by
    \begin{align}
        &\cE_U^{A \rightarrow B} (\rho^{A}) := \cE^{A \rightarrow B}(U^A \rho^A U^{A \dagger}),\\
        &\cD_{U^{\dagger}}^{C \rightarrow A} (\sigma^{C}) := U^{A \dagger}\cD^{C \rightarrow A}(\sigma^C) U^A.
    \end{align}
    Then, it holds that
    \begin{multline}
        \frac{1}{2} \bigl\| \id^{A} - \mbb{E}_U\bigl[ \cD_{U^{\dagger}}^{C \rightarrow A} \circ \cN^{B \rightarrow C} \circ \cE_{U}^{A \rightarrow B}\bigr] \bigr\|_{\diamond} \\
        \leq \Delta_{\rm{q}}(\cD|\cN \circ \cE),
    \end{multline}
    where $\mbb{E}_U$ is the average over the unitary $1$-design $U^A$.
\end{restatable}

\cref{Prop:MESdiamondUnitary} implies that if there exists a QECC $(\cE^{A \rightarrow B}, \cD^{C \rightarrow A})$ that achieves a small decoding error for a noisy quantum channel $\cN^{B \rightarrow C}$ in terms of $\Delta_{\rm{q}}(\cD|\cN \circ \cE)$, then one can achieve the same error in terms of the diamond norm by applying a random unitary $U^A$ and $U^{A \dagger}$ before and after the encoding and decoding, respectively. Hence, if $\Delta_{\rm{q}}(\cD|\cN \circ \cE)$ is small, it is possible to correct the noise $\cN^{B\rightarrow C}$ with the same decoding error in the diamond norm.

In the above scheme, the encoder and the decoder have to share common randomness to sample the same instance of the unitary 1-design. If one would like to avoid using common randomness, the following second method can be used. See, e.g.,~\cite{Watrous2018} for the proof.

\begin{proposition}\label{Prop:MESdiamond}
    Let $(\cE^{A \rightarrow B}, \cD^{C \rightarrow A})$ be a QECC for a noisy quantum channel $\cN^{B\rightarrow C}$. For any subspace $\cH^{A_0} \subseteq \cH^A$ with $\dim \cH^{A_0} \leq d/2$, there exists a pair of quantum channels $\cF^{A_0 \rightarrow A}$ and $\cG^{A \rightarrow A_0}$ such that a new QECC $(\cE \circ \cF, \cG \circ \cD)$ satisfies
    \begin{equation}
        \Delta_{\diamond}(\cG \circ \cD|\cN \circ \cE \circ \cF) \leq 2\sqrt{2\Delta_{\rm q}(\cD|\cN\circ \cE)}.
    \end{equation}
\end{proposition}

We observe from~\cref{Prop:MESdiamond} that, if $\Delta_{\rm{q}}(\cD|\cN\circ \cE) \ll 1$, a small error can be achieved in terms of the diamond norm by restricting the $d$-dimensional Hilbert space $\cH^A$ to the one with a dimension at most $d/2$. This corresponds to reducing one qubit of logical quantum information. Hence, this relation shows that one can relate the decoding error $\Delta_{\rm q}$ with $\Delta_{\diamond}$ at the expense of one logical qubit.

For these reasons, the decoding error for quantum information can be characterized well by $\Delta_{\rm q}(\cD|\cN\circ \cE)$ in QEC.
In addition, as we will show below, the decoding error $\Delta_{\rm q}(\cD|\cN \circ \cE)$ is directly related to the decoding errors of the associated CQ codes. Thus, we use the decoding error $\Delta_{\rm{q}}(\cD|\cN \circ \cE)$ in this paper.

\subsubsection{Decoding error of classical information}

For a CQ code $(W,\cE, M)$, the decoding error is defined by the failure probability of the decoding measurement.
Taking the uniform average over all possible classical inputs $j = 0,\dots, d-1$, we define the average decoding error of classical information as
\begin{multline}
    \Delta_{{\rm cl}, W}(M|\cN \circ \cE) \\
    := 
    \frac{1}{d} \sum_{i \neq j}
    \tr\bigl[M_i^C \cN^{B \rightarrow C} \circ \cE^{A \rightarrow B} (\ketbra{j_W}{j_W}^A)   \bigr].\label{Eq:REclassical}
\end{multline}
In the following, we refer to this decoding error as the classical decoding error, especially when we need to distinguish it clearly from the decoding error for quantum information.

The classical decoding error $\Delta_{{\rm cl}, W}(M|\cN \circ \cE)$ can be rephrased in terms of the maximally-correlated $W$-classical state $\Omega^{AR}_W$ defined by
\begin{equation}
    \Omega_W^{AR} := \frac{1}{d} \sum_{j} \ketbra{j_W}{j_W}^A \otimes \ketbra{j_W}{j_W}^{R}.
\end{equation}
To see this, we introduce a quantum channel $\cD_{M}^{C \rightarrow A}(\rho^C) = \sum_{j =0}^{d-1} \tr\bigl[  M_j^C \rho^C \bigr] \ketbra{j_W}{j_W}^A$. Using this, we have
\begin{multline}
    \Delta_{{\rm cl}, W}(M|\cN \circ \cE) \\
    = \frac{1}{2} \bigl\| \Omega_W^{AR} - \cD_M^{C \rightarrow A} \circ \cN^{B \rightarrow C} \circ \cE^{A \rightarrow B}(\Omega_W^{AR}) \bigr\|_1.
\end{multline}
    
Note that optimal decoding error for classical information never exceeds that for quantum information. This is due to the facts that $\Omega_W^{AR} = \cC^R_W(\Phi^{AR})$, where $\cC_W$ is the completely dephasing channel in the $W$-basis, and that $\cC_W\circ \cD$ is equivalent to a map $\cD_M$ with suitably chosen POVM $M$. Using the monotonicity of the trace norm, we obtain that, for any $\cN \circ \cE$ and $\cD$, there exists a POVM $M$ such that $\Delta_{{\rm cl}, W}(M|\cN \circ \cE)  \leq \Delta_{\rm{q}}(\cD|\cN \circ \cE)$.

\section{Main results} \label{S:MR}

We here summarize our results. 
We present a construction of the C-to-Q decoder and its performance analysis in~\cref{SS:MT} . For completeness, we discuss a possible choice of decoding measurements for the CQ codes used in the construction in~\cref{SS:pPGMs}. Finally, our results on the decoding problem in the Hayden-Preskill protocol are summarized in~\cref{SS:HPCtoQ}.

\subsection{The Classical-to-Quantum decoder}  \label{SS:MT}

We begin with a decoding quantum circuit for a CSS code in~\cref{SSS:DecodeCSS}, which is constructed by combining one-bit teleportation with the classical decoders for the CSS code.
In~\cref{SSS:a}, we extend the decoding circuit to general QECCs and provide a construction of the C-to-Q decoder. A few implications of the C-to-Q decoder are elaborated on in~\cref{SSS:b}.

\subsubsection{A decoding circuit for CSS codes} \label{SSS:DecodeCSS}
Let us consider an $[[N,k]]$-CSS code that encodes $k$ logical qubits into $N$ physical qubits.
The code is constructed from two classical linear codes $C_1$ and $C_2$ on $N$ bits that satisfy $C_2 \subset C_1$. The difference between the number of logical bits for $C_1$ and that for $C_2$ should be $k$. 
In the CSS code, the logical Pauli-$Z$ basis is defined by the classical code $C_1/C_2$, and the logical Pauli-$X$ basis is by $C_2^{\perp}/C_1^{\perp}$.
Let $f_1$ and $f_2$ be the classical decoders for $C_1$ and $C_2^{\perp}$, respectively. They are both boolean functions, taking $N$ bits as an input and outputting $k$ bits. In the CSS code, bit- and phase-flip errors are independently corrected by using $f_1$ and $f_2$, respectively. Note that correcting bit- and phase-flip errors suffice for completing error correction. See~\cref{App:CSS} for the details.

To decode a CSS code after error correction, one can, for instance, simply apply the inverse of the encoding circuit. However, performing error correction and then decoding the code is not the unique way.
In fact, using the fact that the bit- and phase-flip errors are independently corrected in a CSS code, we can construct a decoding quantum circuit that simultaneously achieves error correction and decoding. The idea is based on one-bit teleportation protocol~\cite{ZLC2000}.

\begin{figure}[tb!]
\centering
\includegraphics[width=.3\textwidth,clip]{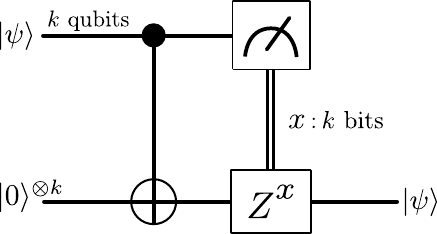}
\caption{A multi-qubit extension of the one-bit teleportation. A $k$-qubit quantum state can be transferred to an ancillary system by applying the controlled-NOT gates, measuring the $k$ qubits in the Pauli-$X$ basis, and applying the feed-forward unitary $Z^x$ to the ancillary system, which depends on the measurement outcome $x \in \{0,1\}^k$.
}
\label{Fig:OBT}
\end{figure}

\begin{figure}[tb!]
\centering
\includegraphics[width=.4\textwidth,clip]{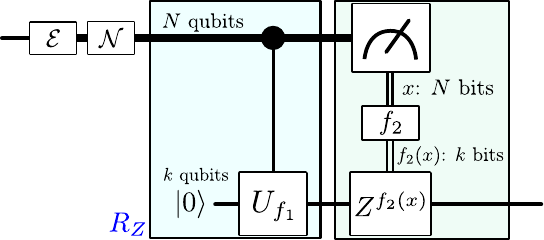}
\caption{A decoding quantum circuit for an $[[N,k]]$-CSS code, which transforms a noisy $N$-qubit state to a $k$-qubit state. The decoder is constructed from the two classical decoders $f_1$ and $f_2$ of the classical linear codes that define the CSS code. Note that the isometry $R_Z$, given by~\cref{eq:z_basis_isometry}, is depicted by attaching ancillary qubits and applying the multi-qubit controlled unitary.
}
\label{Fig:CSSCodes}
\end{figure}

One-bit teleportation is a protocol that transfers a state to an ancillary system in the noiseless situation, and is equivalent to the identity gate in the measurement-based quantum computation~\cite{RB2001,RBB2003}.
In~\cref{Fig:OBT}, a $k$-qubit state $\ket{\psi}$ is transferred to a $k$-qubit ancillary system by the following two steps. 
First, a tensor product of the controlled-NOT gates is applied. Second, the input of $k$ qubits are measured in the Pauli-$X$ basis, and a feed-forward unitary $Z^x = Z^{x_1} \otimes \dots \otimes Z^{x_k}$ is applied to the ancillary system, where $x = (x_1, \dots, x_k)$ is the measurement outcome.

To construct a decoder based on the one-bit teleportation, it is important to notice that the first and the second steps of the one-bit teleportation are based only on the $Z$- and $X$-basis properties of the input state, respectively. 
This feature is well-suited to a CSS code since, in a CSS code, the classical decoder $f_1$ ($f_2$), which is used for correcting the bit-flip (phase-flip) error, is based only on the $Z$-basis ($X$-basis) properties of the noisy state. 
In fact, as we demonstrate below, a decoding circuit for a CSS code is obtained simply by plugging the classical decoders $f_1$ and $f_2$ into the first and the second step of the one-bit teleportation, respectively. 

Similarly to the first step of one-bit teleportation, a controlled unitary is applied to the input of the decoder and the ancillary system, where the former controls the latter.
The controlled unitary here is such that, if the input system is in $\ket{z}$, a unitary $U_{f_1}$ satisfying $\ket{0}^{\otimes k} \mapsto \ket{f_1(z)}$ is applied to the ancillary system.
One may rephrase this operation, including the preparation of the ancillary system, by an isometry $R_Z$ given by
\begin{equation}
    R_Z = \sum_{z \in \{0, 1\}^N} \ketbra{z}{z} \otimes \ket{f_1(z)}. \label{eq:z_basis_isometry}
\end{equation}
By this operation, a \emph{decoded} bit string $f_1(z) \in \{0, 1\}^k$ in the Pauli-$Z$ basis is coherently copied to the $k$-qubit ancillary system. If the decoding error of the classical decoder $f_1$ is sufficiently small, the decoded bit string $f_1(z)$ does not contain any bit-flip error.

The second step is also similar to one-bit teleportation, but the feed-forward unitary is based on the outcome of the classical decoder $f_2$. Let $x \in \{0, 1\}^N$ be the measurement outcome of the input system, when it is measured in the Pauli-$X$ basis. The feed-forward unitary onto the ancillary system is then given by $Z^{f_2(x)}$, where $f_2(x)$ is the \emph{decoded} $k$-bit string. As the classical decoder $f_2$ in the CSS code is used for correcting the phase-flip error, if the decoding error of $f_2$ is sufficiently small, the $k$ bits for determining the feed-forward unitary are correctly chosen.

\begin{figure*}[tb!]
\centering
\includegraphics[width=.8\textwidth,clip]{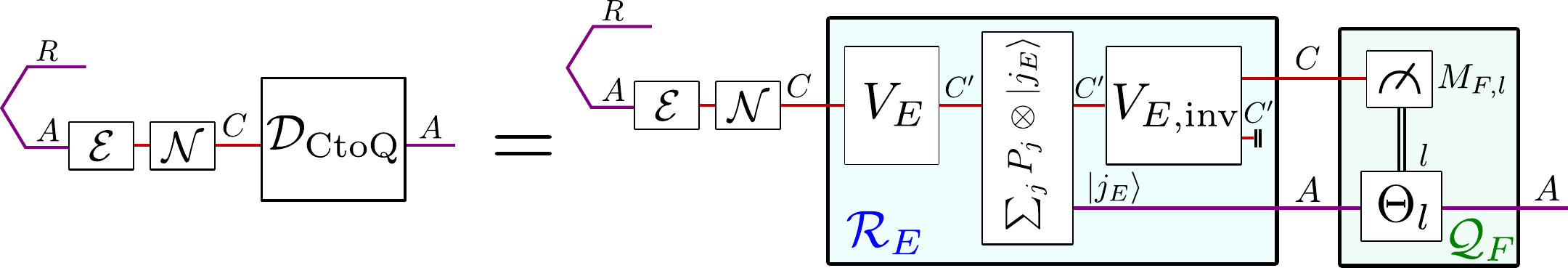}
\caption{Construction of the Classical-to-Quantum decoder from POVMs $M_E$ and $M_F$, which consists of two quantum channels $\cR_E$ and $\cQ_F$. The former $\cR_E$ is an isometry for coherently recording the measurement outcome by the POVM $M_E$ into an ancillary system $A$. The $(V_E, \{ P_j\})$ is a Naimark extension of $M_E$. The isometry $V_{E, \textrm{inv}}$ acts as if it is an inverse of the isometry $V_E$, which is for minimizing the back-action to $C$. 
If the decoding error $\Delta_{{\rm cl}, E}(M_E|\cN \circ \cE)$ is sufficiently small, the channel $\cR_E$ transforms the state on $RCA$ to a noisy GHZ state.
The latter quantum channel $\cQ_F$ plays the role of quantum eraser, which aims to delete all the $E$-classical information from $C$. The eraser succeeds if $(E, F)$ is close to MUB and $\Delta_{{\rm cl}, F}(M_F|\cN \circ \cE)$ is small, transforming the noisy GHZ state in $RCA$ to a maximally entangled state between $RA$.
}
\label{Fig:decoder}
\end{figure*}

In total, the decoding circuit for CSS codes consists of two steps: one is a coherent use of one classical decoder $f_1$, corresponding to the isometry $R_Z$ in~\cref{eq:z_basis_isometry}, and the other is the measurement and feed-forward based on the other classical decoder $f_2$.
The complete decoding circuit is provided in~\cref{Fig:CSSCodes}.
Through these two steps, error correction is completed while the quantum state is transferred to the ancillary system. In other words, error correction and decoding is simultaneously achieved. 

To understand how the circuit works, it is pedagogical to consider the noiseless case, which is nothing but one-bit teleportation from a logical system to an ancillary system.
Let $\ket{\bar{\psi}} = \sum_{j \in \{0,1\}^k} \alpha_j \ket{\bar{j}_Z}$ be an encoded logical state, where $\ket{\bar{j}_Z}$ is the logical Pauli-$Z$ basis in the $N$-qubit system.
With the absence of noise, the controlled unitary with the classical decoder $f_1$ simply copies $j_Z$ to the ancillary system, leading to the transformation such as
\begin{equation}
    \ket{\bar{\psi}}=\sum_{j \in \{0,1\}^k} \alpha_j \ket{\bar{j}_Z} \mapsto \sum_{j \in \{0,1\}^k} \alpha_j \ket{\bar{j}_Z} \otimes \ket{j_Z}.
\end{equation}
If the information of $j_z$ is erased from the input system, we obtain $\ket{\psi}$ in the ancillary system, completing the decoding task. This can be achieved by measuring the input system in the logical Pauli-$X$ basis and by applying the feed-forward unitary on the ancillary system. This is exactly what the second step with the classical decoder $f_2$ does. As a result, the state transformation
\begin{equation}
    \sum_{j \in \{0,1\}^k} \alpha_j \ket{\bar{j}_Z} \otimes \ket{j_Z}\mapsto 
    \sum_{j \in \{0,1\}^k} \alpha_j \ket{j_Z} = \ket{\psi},
\end{equation}
is realized, and a decoded state is obtained in the ancillary system.

The situation does not change much in the presence of noise as far as the noise is correctable by the CSS code. This is simply because, in each step, the classical decoders $f_1$ and $f_2$ correct the bit- and phase-flip errors, respectively.
In~\cref{App:CSS}, we provide an in-depth analysis on the decoding error $\Delta_{\rm q}$ for the quantum information by this decoding circuit and show that
\begin{equation}
    \Delta_{\rm q} \leq \sqrt{\Delta_{{\rm cl}, Z} + \Delta_{{\rm cl}, X}},
    \label{eq:error_for_css}
\end{equation}
where $\Delta_{{\rm cl}, Z}$ and $\Delta_{{\rm cl}, X}$ are the classical decoding errors of the classical decoders $f_1$ and $f_2$, respectively.

Note that $\Delta_{{\rm cl}, Z}$ and $\Delta_{{\rm cl},X}$ are equivalent to the classical decoding errors of CQ codes associated to the CSS code, defined in the $Z$- and $X$-bases, respectively.
More specifically, denoting the encoding map of a CSS code by $\cE$ and the noisy quantum channel by $\cN$, the classical decoding error $\Delta_{{\rm cl}, W}$ for $W=X, Z$ is given by
\begin{equation}
    \Delta_{{\rm cl}, W} = \Delta_{{\rm cl}, W}( M_W | \cN \circ \cE ),
\end{equation}
where $M_W$ is the measurement followed by the classical decoding process either $f_1$ or $f_2$.

\subsubsection{Construction of the C-to-Q decoder} \label{SSS:a}

The main idea of the C-to-Q decoder is to follow the above construction of the decoding circuit for CSS codes.
We, however, use decoding measurements of CQ codes instead of the classical decoders in each step.

Given an encoding channel $\cE$ of a QECC, let us consider two CQ codes $(E, \cE, M_E)$ and $(F, \cE, M_F)$. Here, $E := \{ \ket{j_E} \}_{j=0}^{d-1}$ and $F := \{ \ket{l_F} \}_{l=0}^{d-1}$ are bases, and $M_W:= \{ M^C_{W, j} \}_{j= 0}^{d-1}$ ($W =E,F$) are decoding measurements. The decoding errors for classical information are $\Delta_{{\rm cl}, W}(M_W|\cN\circ \cE)$.

A C-to-Q decoder $\cD^{C \rightarrow A}_{\rm CtoQ}$ based on the decoding measurements $M_E$ and $M_F$ of the two CQ codes is constructed as in~\cref{Fig:decoder}.
Similarly to the decoding circuit for CSS codes in~\cref{Fig:CSSCodes}, the C-to-Q decoder consists of two steps represented by quantum channels $\cR_E^{C\rightarrow CA}$ and $\cQ_F^{CA \rightarrow A}$.
The first quantum channel $\cR_E^{C\rightarrow CA}$ corresponds to the isometry for coherently recording the outcome of one decoding measurement $M_E$ into an ancillary system, and the second one $\cQ_F^{CA \rightarrow A}$ is based on the other decoding measurement $M_F$ and plays a role of the measurement and feed-forward.

The first quantum channel $\cR_E^{C\rightarrow CA}$ is composed of an isometry $R_E^{C \rightarrow C C' A}$ from $C$ to $CC'A$. To define the isometry $R_E^{C \rightarrow C C' A}$, let $(V_E^{C \rightarrow C'},\{P^{C'}_j\}_{j=0}^{d-1})$ be a Naimark extension of the POVM $M_E$, that is, a pair of an isometry $V_E^{C\rightarrow C'}$ and orthogonal projections $\{ P^{C'}_j \}_{j=0}^{d-1}$ such that
\begin{equation}
    M^C_{E,j}=(V_E^{C\rightarrow C'})^{\dagger} P^{C'}_j V_E^{C\rightarrow C'},
\end{equation}
where $\sum_{j=0}^{d-1} P_j^{C'} = I^{C'}$.
Then, 
\begin{equation}
R_E^{C \rightarrow C C' A}  := V_{E, \textrm{inv}}^{C'\rightarrow CC'}  \biggl(\sum_{j = 0}^{d-1} P_j^{C'} \otimes \ket{j_E}^{A} \biggr)V_E^{C\rightarrow C'},\label{Eq:cohmeas1}
\end{equation}
with 
\begin{multline}
    V_{E, \textrm{inv}}^{C'\rightarrow CC'} :=
    V_E^{C\rightarrow C'\dagger} \otimes  \ket{e_0}^{C'}  \\
    + \ket{e'_0}^{C} \otimes (I^{C'} - V_E^{C\rightarrow C'}V_E^{C\rightarrow C' \dagger}),\label{Eq:unitarize}
\end{multline}
where $\ket{e_0}^{C'}$ is a unit vector in the range of $V_E^{C\rightarrow C'}$, and $\ket{e'_0}^{C}$ is an arbitrary unit vector in $C$. 
As $V_E^{C\rightarrow C'}V_E^{C\rightarrow C' \dagger}$ is a projection, $V_{E, \textrm{inv}}^{C'\rightarrow CC'}$ is also an isometry. 
The quantum channel $\cR_E^{C \rightarrow CA}$ is then defined as
\begin{equation}
    \cR_E^{C\rightarrow CA}(\rho^C) := \tr_{C'}\bigl[R_E^{C \rightarrow C C' A} \rho^C R_E^{C \rightarrow C C' A \dagger}\bigr]. \label{Eq:Ekoshietsgh}
\end{equation}

To understand the action of this quantum channel $\cR_E^{C \rightarrow CA}$, we observe that the middle and the right-most terms in~\cref{Eq:cohmeas1} coherently record the measurement outcome of $M_E$ into an ancillary system $A$. The left-most term $V_{E, \textrm{inv}}^{C'\rightarrow CC'}$ is for undoing the action on $C$ as much as possible, so that the back-action to the system is minimized.

We here comment on the fact that the choice of the quantum channel $\cR_E^{C \rightarrow CA}$ is not unique. In fact, we can instead use the isometry 
\begin{equation}
    \sum_{j=0}^{d-1} M_{E,j}^C \otimes \ket{j_E}^A + \sqrt{I^C - \sum_{j=0}^{d-1} (M_{E,j}^C)^2} \otimes \ket{{\rm fail}}^A,    
\end{equation} 
from $C$ to $CA$, where $\ket{{\rm fail}}^A$ is the state in $A$ orthogonal to $\ket{j_E}^A$ for $j=0, \dots, d-1$. Accordingly, $A$ is a $(d+1)$-dimensional system in this case. In~\cref{SS:Thm1Proof}, we explain that the use of this isometry instead of $\cR_E^{C \rightarrow CA}$ leads to the same action.

The second quantum channel $\cQ_F^{CA \rightarrow A}$ is constructed from the other decoding measurement $M_F$ as 
\begin{equation}
    \cQ_F^{CA \rightarrow A}(\rho^{CA})  = \sum_{l=0}^{d-1} \Theta_l^{A} \tr_{C}\bigl[ M_{F,l}^C \rho^{CA} \bigr] \Theta_l^{A \dagger},\label{Eq:defC}
\end{equation}
where
\begin{align}
    &\Theta_l^A :=\sum_{j=0}^{d-1} e^{i {\rm arg} \braket{j_E}{l_F}} \ketbra{j_E}{j_E}^A. \label{Eq:CorrTheta}
\end{align}
This channel corresponds to the operations that the system $C$ is measured by $M_F$, and then, a feed-forward unitary $\Theta_l^{A}$ is applied to $A$ depending on the measurement outcome $l$.
Clearly, this is a generalization of the measurement and feed-forward in the decoding circuit for CSS codes.

Based on these two quantum channels, we define the C-to-Q decoder as follows.
While our primary focus is on the decoding measurements of two CQ codes, the C-to-Q decoder itself can be defined for any pair of two POVMs with $d$ outcomes.

\begin{definition}[Classical-to-Quantum decoder] \label{def:C-to-Q}
    \emph{
   For a given pair of two POVMs with $d$ outcomes, $(M_E,M_F)$, we define a C-to-Q decoder by
    \begin{equation}
        \cD^{C \rightarrow A}_{\rm CtoQ} = \cQ_F^{CA \rightarrow A} \circ \cR_E^{C\rightarrow CA}, \label{Eq:defCtoQ}
    \end{equation}
    where quantum channels $\cR_E^{C\rightarrow CA}$ and $\cQ_F^{CA \rightarrow A}$ are defined by~\cref{Eq:Ekoshietsgh,Eq:defC}, respectively.
    }
\end{definition}

Note that we can construct two C-to-Q decoders from a single pair $(M_E, M_F)$ of POVMs as the two measurements $M_E$ and $M_F$ play different roles in the C-to-Q decoder.
By swapping their roles, two different C-to-Q decoders are obtained, which in general lead to different decoding errors.

The following theorem provides an upper bound on the decoding error by the C-to-Q decoder (see~\cref{SS:Thm1Proof} for the proof).

\begin{restatable}{theorem}{CtoQ}\emph{(Decoding error of the C-to-Q decoder)} \label{Thm:CtoQ}
    Given an encoding channel $\cE^{A \rightarrow B}$ of a QECC for a noisy channel $\cN^{B \rightarrow C}$, let $(E, \cE^{A \rightarrow B}, M_E)$ and $(F, \cE^{A \rightarrow B}, M_F)$ be the associated CQ codes in the $E$- and $F$-bases, respectively. Let $\Delta_{{\rm cl},E}=\Delta_{{\rm cl},E}(M_E|\cN \circ \cE)$ and $\Delta_{{\rm cl},F}=\Delta_{{\rm cl},E}(M_F|\cN \circ \cE)$ be their classical decoding errors.
    The decoding error of the C-to-Q decoder $\cD_{\rm CtoQ}$ constructed from $M_E$ and $M_F$ satisfies
    \begin{multline}
        \Delta_{\rm{q}}(\cD_{\rm CtoQ}|\cN \circ \cE)
        \leq  \sqrt{\Delta_{{\rm cl},E}\bigl(2 - \Delta_{{\rm cl},E} \bigr)}  \\
        + \sqrt{\Delta_{{\rm cl}, F}}
        + \sqrt{\Xi_{EF}}.  
    \end{multline}
    Here, $\Xi_{EF}$ is given by
    \begin{align}
    \Xi_{EF} &:=  1 - \sum_{l=0}^{d-1} \tr \bigl[M_{F, l} \cN \circ \cE(\pi) \bigl] F_{BC}\bigl({\rm unif}_d, p_l\bigr), \\
    &\leq  1 - \min_{l=0,\dots, d-1} F_{BC}\bigl({\rm unif}_d, p_l\bigr), \label{Eq:UB}
    \end{align}
    where $\pi$ is the completely mixed state in $A$, and 
    \begin{equation}
        F_{BC}({\rm unif}_d, p_l)=\biggl(\sum_j \sqrt{\frac{p_l(j)}{d}}\biggr)^2    
    \end{equation}
    is the Bhattacharyya distance between the uniform probability distribution ${\rm unif}_d$ on $[0, d-1]$ and the probability distribution $\{p_l(j)=| \braket{j_E}{l_F} |^2 \}_{j=0}^{d-1}$ determined by the two bases $E$ and $F$.    
\end{restatable}

\cref{Thm:CtoQ} shows that the decoding error of the C-to-Q decoder can be split into three terms: $\Delta_{{\rm cl},E}$, $\Delta_{{\rm cl},F}$, and $\Xi_{EF}$. The first two terms represent classical decoding errors of the associated CQ codes and are analogous to the CSS codes (see~\cref{eq:error_for_css}). 
Thus, \cref{Thm:CtoQ} extends the decoding error of the CSS code decoding circuit to the case of general QECCs.
The last term, $\Xi_{EF}$, quantifies how far the bases $E$ and $F$ are from mutually unbiased bases (MUBs). For instance, $\Xi_{EF} = 0$ if and only if $(E,F)$ is MUBs, and $\Xi_{EF} = 1-1/d$ if and only if $E=F$. 
This term does not appear in~\cref{eq:error_for_css} for CSS codes because CSS codes are constructed from two classical codes for the Pauli-$X$ and -$Z$ bases that are mutually unbiased and $\Xi_{XZ} = 0$. 
The term $\Xi_{EF}$ in~\cref{Thm:CtoQ} is the non-trivial term that quantitatively reveals the fundamental role of complementarity in decoding general QECCs.

Important instances of MUBs are the (generalized) Pauli bases $X$ and $Z$ in multi-qubit systems and the Heisenberg-Weyl group for qudits. In these cases, $\Xi_{EF} = 0$, and the unitaries $\Theta_l^A$ in~\cref{Eq:CorrTheta} has a simple form. For instance, when $E$ and $F$ are set to be $Z$ and $X$, respectively, then $\Theta_l^A=(Z^A)^l$, where the operator $Z^A$ is the (generalized) Pauli-$Z$ operator on $A$. 

Note that $\Xi_{EF}$ can also be bounded from above in the form different from~\cref{Eq:UB} as
\begin{multline}
    \Xi_{EF} \leq 1 - \frac{1}{d} \sum_{l=0}^{d-1}  F_{BC}\bigl({\rm unif}_d, p_l\bigr)\\
    + \Delta_{{\rm cl}, F}\bigl( F_{BC, \max} - F_{BC, \min} \bigr),\label{Eq:koregeri}
\end{multline}
where $F_{BC, \max} = \max_l F_{BC}\bigl({\rm unif}_d, p_l\bigr)$ and $F_{BC, \min} = \min_l F_{BC}\bigl({\rm unif}_d, p_l\bigr)$.
This may be useful especially when $\Delta_{{\rm cl}, F}$ is small.
See~\cref{App:last} for the derivation.

Apart from its fundamental importance,~\cref{Thm:CtoQ} is of practical use for experimentally evaluating the performance of a QECC. In fact,~\cref{Thm:CtoQ} enables us to estimate the QECC performance by estimating classical decoding errors of two CQ codes and by checking the complementarity of the bases that define the classical information. 
Estimating classical decoding errors of CQ codes is practically more tractable than directly quantifying the performance of the QECC: while checking the QECC performance requires handling the system in a fully-quantum manner, classical decoding errors can be evaluated simply by measuring the system.
When we do so, it is important to understand how the complementarity of the bases affects the QEC performance since the presence of experimental imperfection, such as calibration errors in the device, may unintentionally make the encoding bases deviate from MUBs. 
By evaluating such errors in the device and by combining it with the classical decoding errors, the QECC performance can be estimated by~\cref{Thm:CtoQ}. 

Another possible use of the C-to-Q decoder is for switching code by preparing the ancillary system in the QECC different from the QECC for the noisy channel. This may be of particular importance in quantum communication since the QECC in a local environment shall have different properties from the QECC for transmitting quantum information: fault-tolerance would be the most important in the former, and the QECC with higher encoding rate is more preferable in the latter. 
By using the C-to-Q decoder, one can simultaneously correct errors and switch QECCs.\\

Before we move on, we provide a high-level explanation about how the C-to-Q decoder works, which is basically for the same reason as the decoding circuit of the CSS code as explained in~\cref{SSS:DecodeCSS}.
First, because the quantum channel $\cR_E^{C \rightarrow CA}$ is designed to coherently record the measurement outcome of $M_E$ into the ancillary system $A$, it generates a coherent correlation between the noisy system $C$ and the ancillary system $A$ in the basis $E$.
In fact, when the POVM $M_E$ satisfies $\Delta_{{\rm cl},E}(M_E|\cN \circ \cE) = 0$, it holds that (see~\cref{S:CtoQProof})
\begin{multline}
    \cR_E^{C \rightarrow CA}\circ \cN^{B \rightarrow C} \circ \cE^{A \rightarrow B}(\Phi^{AR})\\
    = \frac{1}{d}\sum_{j,i}\ketbra{j_E^*}{i_E^*}^R \otimes \cN^{B \rightarrow C} \circ \cE^{A \rightarrow B}(\ketbra{j_E}{i_E}^A) \otimes  \ketbra{j_E}{i_E}^A. \label{Eq:nksdo34p}
\end{multline}
Clearly, this state is a noisy GHZ state over $R$, $C$, and $A$, where we mean by noisy that $\cN \circ \cE$ is applied onto $A$.
With this state, the remaining task is to transform the noisy GHZ state to the maximally entangled state between $R$ and $A$.

This can be accomplished by the measurement and feed-forward. For instance, a simple GHZ state $\ket{000}_Z + \ket{111}_Z$ can be transformed to a maximally entangled state $\ket{00}_Z + \ket{11}_Z$ by measuring one qubit in the $X$ basis and by applying a feed-forward unitary, which is simply $Z^m$ with $m=0,1$ being the measurement outcome. 
This mechanism is known as the quantum eraser since the key in this process is to completely erase the $Z$-information from one qubit by the measurement and feed-forward. In the case of the C-to-Q decoder, the noisy GHZ state (\cref{Eq:nksdo34p}) can be transformed to the maximally entangled state if one could measure the noisy system $C$ in the basis complementary to $E$ and apply a feed-forward unitary to $A$.
However, we know only the error when $C$ is measured either by $M_E$ or $M_F$, forcing us to apply $M_F$ onto $C$ instead of the ideal one. This results in an additional error quantified by $\Xi(E,F)$.\\

An immediate corollary of~\cref{Thm:CtoQ} is the following. 

\begin{corollary} \label{Cor:CtoQMUB}
Under the same setting as in~\cref{Thm:CtoQ}, let $(E, F)$ be a pair of MUBs. Then, the C-to-Q decoder $\cD_{\rm CtoQ}$ satisfies
\begin{align}
    \Delta_{\rm{q}}(\cD_{\rm CtoQ}|\cN \circ \cE) 
    &\leq 
    \sqrt{\Delta_{{\rm cl},E}\bigl(2 - \Delta_{{\rm cl},E} \bigr)}
    + \sqrt{\Delta_{{\rm cl}, F}}\label{Eq:DNZ} \\
    &\leq 
    (1 + \sqrt{2}) \max_{W= E, F}\sqrt{\Delta_{{\rm cl},W}}.
\end{align}
\end{corollary}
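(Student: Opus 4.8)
The plan is to simply specialize Theorem~\ref{Thm:CtoQ} to the case where $(E,F)$ is a pair of mutually unbiased bases and then discard/simplify the terms that collapse. The first step is to compute the distributions $p_l$ in the MUB case: by definition of MUBs, $|\braket{j_E}{l_F}|^2 = 1/d$ for every $j$ and $l$, so each $p_l$ is exactly the uniform distribution $\mathrm{unif}_d$ on $[0,d-1]$. Consequently the Bhattacharyya distance is maximal, $F_{BC}(\mathrm{unif}_d, p_l) = \bigl(\sum_{j} \sqrt{1/d^2}\,\bigr)^2 = 1$ for each $l$.

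Next I would plug this into the definition of $\Xi_{EF}$ in Eq.~\eqref{Eq:Xidef}. With every $F_{BC}(\mathrm{unif}_d, p_l) = 1$, the sum reduces to $\sum_l \tr[\cT^{A\rightarrow C}(\pi^A) M_{F,l}] = \tr\bigl[\cT^{A\rightarrow C}(\pi^A)\sum_l M_{F,l}\bigr] = \tr[\cT^{A\rightarrow C}(\pi^A)] = 1$, using completeness of the POVM $M_F$ (namely $\sum_l M_{F,l} = I^C$) and trace preservation of $\cT$. Hence $\Xi_{EF} = 0$ and the last term $\sqrt{\Xi_{EF}}$ in Eq.~\eqref{Eq:DN} vanishes, which gives the first line of Eq.~\eqref{Eq:DNZ} immediately.

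For the second line I would bound the two remaining terms crudely. Since $0 \le \Delta_{{\rm cl},E} \le 1$ we have $2 - \Delta_{{\rm cl},E} \le 2$, so $\sqrt{\Delta_{{\rm cl},E}(2-\Delta_{{\rm cl},E})} \le \sqrt{2\,\Delta_{{\rm cl},E}} \le \sqrt{2}\,\max_{W=E,F}\sqrt{\Delta_{{\rm cl},W}}$, and trivially $\sqrt{\Delta_{{\rm cl},F}} \le \max_{W=E,F}\sqrt{\Delta_{{\rm cl},W}}$. Adding the two estimates yields the factor $1+\sqrt{2}$.

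There is no real obstacle here: the corollary is a direct specialization, and the only thing to be careful about is justifying $\Xi_{EF}=0$ cleanly from POVM completeness and trace preservation rather than from the weaker bound in Eq.~\eqref{Eq:UB} (which would already give $\Xi_{EF}\le 1-\min_l F_{BC} = 0$ as well, so either route works). If one wanted a converse-flavored remark, one could note that Eq.~\eqref{Eq:Xidef} shows $\Xi_{EF}=0$ forces $F_{BC}(\mathrm{unif}_d,p_l)=1$ on the support of the outcome distribution, i.e.\ $(E,F)$ is ``MUB as seen by $M_F$ through $\cT$,'' but this is not needed for the corollary.
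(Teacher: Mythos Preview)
Your proposal is correct and follows exactly the same approach as the paper: the paper's proof simply notes that $\Xi_{EF}=0$ for MUBs (yielding the first inequality from Theorem~\ref{Thm:CtoQ}) and that the second inequality ``trivially follows.'' Your write-up supplies the details the paper omits---computing $F_{BC}(\mathrm{unif}_d,p_l)=1$ and invoking POVM completeness plus trace preservation---but the route is identical.
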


This corollary improves by constant factors the decoding errors of the previous decoding strategies~\cite{K2007, BR2009, T2010, Renes2016}, which are all different.

The decoding circuit for CSS codes, shown in~\cref{Fig:CSSCodes}, is an important instance to which~\cref{Cor:CtoQMUB} can be applied. However, one may notice that the upper bound on the decoding error by the decoding circuit for CSS codes (\cref{eq:error_for_css}) is better than~\cref{Eq:DNZ}. 
The difference comes from the fact that the error in the quantum channel $\cQ_F^{CA \rightarrow C}$ for the measurement and feed-forward may in general be affected by the preceding procedure $\cR_E^{C \rightarrow CC'A}$. In contrast, the two decoding procedures for CSS codes are independent and do not affect each other since the logical-$Z$ and -$X$ information are orthogonally encoded in the case of CSS codes. See~\cref{App:CSS} for the further details.

\subsubsection{Implications of the C-to-Q decoder} \label{SSS:b}
\cref{Cor:CtoQMUB} has two immediate implications.
The first implication is that, when decoding measurements of the CQ codes are suitably chosen, the C-to-Q decoder is nearly optimal. 

To see this, let $\cD_{opt}$ be an optimal decoder for an encoder $\cE$ and a noisy map $\cN$, that is, $\Delta_{\rm{q}}(\cD_{opt}|\cN \circ \cE) \leq \Delta_{\rm{q}}(\cD|\cN \circ \cE)$ for any decoder $\cD$. From the optimal decoder $\cD_{opt}$, we can trivially construct two decoding measurements $M_W$ ($W=E, F$) for the CQ codes that satisfy $\Delta_{{\rm cl}, W}(M_W|\cN \circ \cE) \leq \Delta_{\rm{q}}(\cD_{opt}|\cN \circ \cE) $. The C-to-Q decoder based on such measurements $M_E$ and $M_F$ satisfies
\begin{multline}
    \Delta_{\rm{q}}(\cD_{opt}|\cN \circ \cE) \leq \Delta_{\rm{q}}(\cD_{\rm CtoQ}|\cN \circ \cE) \\
    \leq  (1 + \sqrt{2}) \sqrt{\Delta_{\rm{q}}(\cD_{opt}|\cN \circ \cE)}, \label{Eq:trivial}
\end{multline}
where we have used~\cref{Cor:CtoQMUB} to obtain the last inequality. Thus, the decoding error by the C-to-Q decoder can be no worse than a square root of the optimal one. 
This immediately implies that the C-to-Q decoder combined with a good encoder is capable to achieve the quantum capacity, both entanglement non-assisted~\cite{L1997, S2002, D2005} and assisted ones~\cite{BSST1999,BSST2002}. In this sense, the C-to-Q decoder is a capacity-achieving decoder.

The second implication is about encoding operations that achieve capacities of noisy quantum channels. 
For a given noisy quantum channel $\cN^{B \rightarrow C}$, the largest amount of quantum information that can be reliably transmitted by the channel is called the quantum capacity $C_q(\cN)$~\cite{L1997,S2002,D2005}. 
The quantum capacity is achieved by suitable QECCs. 
Similarly, the largest amount of classical information reliably transmittable by the noisy channel is referred to as the classical capacity $C_c(\cN)$~\cite{SW1998,H1998}, which is achieved by suitable CQ codes.
It trivially holds that $C_q(\cN) \leq C_c(\cN)$ for any noisy channel $\cN$ as classical information is easier to transmit than quantum one.

From~\cref{Cor:CtoQMUB}, we observe that, unless $C_q(\cN) = C_c (\cN)$, an encoder for the CQ code in the $X$-basis that achieves the classical capacity cannot serve as a good encoder for the CQ code in the $Z$-basis.
This conclusion follows by contradiction: let $\cE^{A \rightarrow B}_c$ be the encoder for the CQ code in the $X$-basis that achieves the classical capacity. If $\cE_{c}$ were also a good encoder for the CQ code in the $Z$-basis that achieves the classical capacity, then~\cref{Cor:CtoQMUB} would imply that $C_q(\cN) = C_c (\cN)$.
Therefore, unless $C_q(\cN) = C_c (\cN)$, a good encoder for a CQ code must be specialized to a specific basis.
This is of theoretical interest, as it suggests that the complementarity principle imposes a limitation on the optimal encoding operations for CQ codes.

\subsection{Decoding classical information by projection-based PGMs} \label{SS:pPGMs}
Using~\cref{Thm:CtoQ}, the task of constructing a decoder for a general QECC can be reduced to constructing decoding measurements for the associated CQ codes. To complete the analysis, we focus on the latter problem and provide a sufficient condition for decoding a CQ code.
In this context, as we are concerned only with classical information, we do not explicitly specify the basis and use simplified notation such as $\ket{j}$.

A commonly used decoder for a CQ code is a PGM that is known to achieve the classical capacity.~\cite{SW1998, H1998}.  Here, we slightly modify the PGM and provide an upper bound on the classical decoding error.
We call the modified one projection-based PGM (pPGM), which is basically the same idea as that in~\cite{H1998} and is sometimes referred to PGM as well in the literature.
Let $\Pi_{j}^{C}$ be a projection onto the support of $\cN^{B \rightarrow C} \circ \cE^{A \rightarrow B} (\ketbra{j}{j}^A)$. We define a POVM as
\begin{equation}
    M_{{\rm pPGM}} = \bigl\{ \Pi^{-1/2} \Pi_{j} \Pi^{-1/2} \bigr\}_j,
\end{equation}
where $\Pi := \sum_j \Pi_{ j}$. 

Following the conventional analysis, a sufficient condition for decoding classical information by the pPGM can be obtained. See~\cref{S:pPGMs} for the proof.

\begin{restatable}{proposition}{SC} \label{Prop:SC}
    Let $\cE^{A \rightarrow B}$ be an encoding channel of a QECC for a noisy channel $\cN^{B\rightarrow C}$, and $\tau_{\pi}^C$ and $\tau_j^C$ ($ j = 0, \dots, d-1$) be defined as
    \begin{multline}
        \tau_{\pi}^C:= \cN^{B \rightarrow C} \circ \cE^{A \rightarrow B}(\pi^A), \\
        \text{\ \ and\ \ } \tau_j^C:=\cN^{B \rightarrow C} \circ \cE^{A \rightarrow B}(\ketbra{j}{j}^A),
    \end{multline}
    where $\pi^A$ is the completely mixed state and $W=\{\ket{j}\}_{j=0}^{d-1}$ is a basis. The classical decoding error of the pPGM $M_{{\rm pPGM}}$ satisfies
    \begin{align}
        \Delta_{{\rm cl}, W}(M_{{\rm pPGM}}|\cN \circ \cE) 
        &\leq
        \frac{1}{d \lambda_{\rm min}} \sum_{i \neq j} \tr \bigl[ \tau_i \tau_j \bigr],
    \end{align}
    which can be further reduced to 
    \begin{multline}
        \Delta_{{\rm cl}, W}(M_{{\rm pPGM}}|\cN \circ \cE) \\
        \leq \frac{1}{\lambda_{{\rm min}}} \biggl\{ \frac{d}{2^{  H_2(C)_{\tau_{\pi}}}} - \frac{1}{d} \sum_{j=0}^{d-1} \frac{1}{2^{H_2(C)_{\tau_j}}} \biggr\}. \label{Eq:nokafernvroe;rkare}
    \end{multline}
    Here, $\lambda_{{\rm min}}:= \min_{j \in [0, d-1]} \lambda_{\rm min} ( \tau^C_{j})$ with $\lambda_{\rm min}(\sigma)$ being the minimum non-zero eigenvalue of $\sigma$, and $H_2(C)_{\rho}$ is the collision entropy of $\rho$.
\end{restatable}

\begin{figure*}[tb!]
\centering
\includegraphics[width=0.8\textwidth,clip]{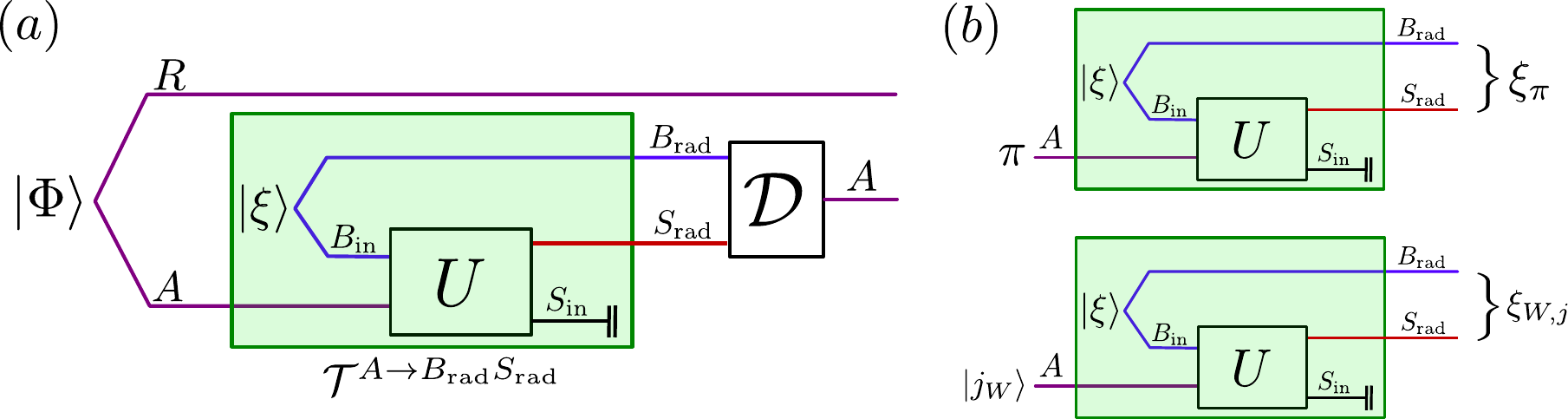}
\caption{(a) Diagram of the Hayden-Preskill protocol. The goal is to find the best possible decoder $\cD^{B_{\rm rad} S_{\rm rad} \rightarrow A}$ such that $\rho_{\rm fin}^{AR} \approx \Phi^{AR}$. See the main text for the details. (b) Definitions of $\xi_{W,j}$ and $\xi_{\pi}$ defined in~\cref{Eq:34vlp'pmer,Eq:34vlp'pmer2}.}
\label{Fig:decoderHP}
\end{figure*}

\cref{Prop:SC} implies that the classical decoding error of a CQ code by the pPGM is characterized by the minimum non-zero eigenvalue of $\tau_j$, and the collision entropies of $\tau_{\pi}$ and $\tau_j$.
Since the collision entropy of $\tau_j$ is closely related to the minimum non-zero eigenvalues, $\lambda_{{\rm min}}$ and $H_2(C)_{\tau_{\pi}}$ would be the most important quantities.
When the minimum non-zero eigenvalue is close to zero, we may exploit the technique of \emph{smoothing} (see, e.g.,~\cite{RS2004}), but we leave the explicit use of smoothing as a future problem.

\cref{Prop:SC} can be regarded as a natural sufficient condition for successful decoding of the CQ code.
We observe from~\cref{Prop:SC} that the classical decoding error is small if $d2^{-H_2(C)_{\tau_{\pi}}} \approx \frac{1}{d}\sum_j 2^{-H_2(C)_{\tau_{j}}}$. This is achieved when there is no collision by $\cN \circ \cE$ in the sense that each pair $\ket{j}$ and $\ket{i}$ $(i \neq j)$ is mapped by the channel $\cN \circ \cE$ to a pair of states with negligible overlaps in their supports. Furthermore,~\cref{Eq:nokafernvroe;rkare} shows that each overlap should be negligibly small compared to $\lambda_{\rm min}$.
When these conditions are met, it is naturally expected that the pPGM works well as a decoder for classical information.

\subsection{Decoding Hayden-Preskill by the C-to-Q decoder} \label{SS:HPCtoQ}

To demonstrate the power of the C-to-Q decoder, we apply it to the Hayden-Preskill protocol~\cite{HP2007}. The protocol, originally proposed in the context of the black hole information paradox, is closely related to random coding under erasure noise. 
We begin with an overview of the protocol in~\cref{SSS:HP} and informally present our results in~\cref{SSS:HPresults}.

\subsubsection{Setting of the Hayden-Preskill protocol} \label{SSS:HP}

The Hayden-Preskill protocol is a qubit toy model of the black hole information paradox~\cite{HP2007} and has been extensively studied from various perspectives~\cite{HQRY2016,RY2017,NWK2020,TTK2022}.
Let $B_{\rm in}$ be an $N$-qubit system representing a quantum black hole, with an initial state $\xi^{B_{\rm in}}$.
We denote by $\ket{\xi}^{B_{\rm in} B_{\rm rad}}$ a purification of $\xi^{B_{\rm in}}$ by a purifying system $B_{\rm rad}$, which corresponds to the \emph{past} Hawking radiation.
A quantum information source $A$ of $k$ qubits, defined by a state $\Phi^{AR}$ maximally entangled with a reference system $R$, is thrown into the black hole, enlarging the BH $B_{\rm in}$ to $S:=A B_{\rm in}$ of $N+k$ qubits. The composite system $S$ then undergoes unitary dynamics $U^S$ and is split into two subsystems: $S_{\rm in}$ with $N+k-\ell$ qubits and $S_{\rm rad}$ with $\ell$ qubits. Note that $S= AB_{\rm in}=S_{\rm in} S_{\rm rad}$.
The subsystem $S_{\rm rad}$ represents the \emph{new} Hawking radiation. 
The entire protocol is illustrated in~\cref{Fig:decoderHP} (a). 

The task of the protocol is to decode the $k$-qubit information of $A$ from the past and new radiations $B_{\rm rad} S_{\rm rad}$. The decoding error, when decoded by $\cD^{B_{\rm rad} S_{\rm rad} \rightarrow A}$, is denoted by
\begin{equation}
\Delta_{\rm{q}}(\cD | \xi, U)
=
\frac{1}{2}\bigl \|\Phi^{AR} - \cD^{B_{\rm rad} S_{\rm rad} \rightarrow A}(\rho^{B_{\rm rad} S_{\rm rad} R}_{\rm fin} )\bigr\|_1, \label{Eq:Delta19}
\end{equation}
where $\rho^{B_{\rm rad}S_{\rm rad}R}_{\rm fin} := \tr_{S_{\rm in}} \bigl[ \cU^S(\Phi^{AR} \otimes \xi^{B_{\rm in} B_{\rm rad}}) \bigr]$, and $\cU(\rho) = U \rho U^{\dagger}$.

As explained in~\cite{HP2007}, the Hayden-Preskill protocol is analogous to transmitting quantum information through a quantum erasure channel~\cite{BDS1997,BSST1999} with an erasure rate $1-\frac{\ell}{N+k}$.
However, unlike typical settings in quantum information theory, the encoder in the Hayden-Preskill protocol is fixed to the specific unitary dynamics $U^S$.

In~\cite{HP2007}, an upper bound on the average decoding error was derived under the assumption that $U^S$ is a Haar random unitary. 

\begin{theorem}[Previous result about the Hayden-Preskill protocol~\cite{HP2007, DBWR2010}] \label{Thm:HP}
    When the unitary $U^S$ is Haar random, there exists a family of decoders $\cD$ whose decoding error satisfies
    \begin{equation}
        \EH \bigl[ \inf_{\cD} \Delta_{\rm{q}}(\cD | \xi, U) \bigr] \leq 2^{\frac{1}{2}(\ell_{\rm th} -\ell)},
    \end{equation}
    where $\ell_{\rm th} := k+ \frac{N-H_2(B_{\rm in})_{\xi}}{2}$. 
\end{theorem}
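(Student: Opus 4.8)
The plan is to recognize the Hayden-Preskill protocol as a special instance of the general decoding setting and then invoke the standard decoupling argument. First I would identify the channel to be decoded: the green region in Fig.~\ref{Fig:decoderHP}(a) is a fixed CPTP map $\cT^{A \to B_{\rm rad} S_{\rm rad}}$ defined by $\cT(\rho^A) := \tr_{S_{\rm in}}[\cU^S(\rho^A \otimes \xi^{B_{\rm in} B_{\rm rad}})]$. The optimal decoding error $\inf_\cD \Delta_{\rm q}(\cD|\xi,U)$ is then, by the Fuchs--van de Graaf inequalities \eqref{FvG} and the fact that the optimal fidelity of a channel equals the fidelity of its complementary channel's output with a product state (the decoupling principle, cf.~\cite{D2005,DBWR2010}), controlled by how close the state on the complementary system $S_{\rm in}$ is to a product state. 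Concretely, the complementary channel sends $\Phi^{AR}$ to $\rho^{S_{\rm in} R}_{\rm fin} = \tr_{B_{\rm rad}S_{\rm rad}}[\cU^S(\Phi^{AR}\otimes \xi^{B_{\rm in}B_{\rm rad}})]$, and one has the bound
\begin{equation}
\inf_\cD \Delta_{\rm q}(\cD|\xi,U) \leq \bigl\| \rho^{S_{\rm in}R}_{\rm fin} - \rho^{S_{\rm in}}_{\rm fin}\otimes \Phi^R \bigr\|_1,
\end{equation}
with $\Phi^R = \pi^R$. So it suffices to show the right-hand side decouples on average.

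Next I would carry out the Haar average of the decoupling term. Writing $U^S$ as Haar random and using the standard trick $\|X\|_1 \leq \sqrt{d_{S_{\rm in}} d_R}\,\|X\|_2$ followed by Jensen's inequality, it remains to bound $\EH \|\rho^{S_{\rm in}R}_{\rm fin} - \rho^{S_{\rm in}}_{\rm fin}\otimes\pi^R\|_2^2$. Expanding the Hilbert--Schmidt norm and using the formula for the second moment of the Haar measure (the Weingarten/twirl formula, $\EH[U^{\otimes 2}(\cdot)U^{\dagger\otimes 2}]$ being a combination of the identity and swap on $S \otimes S$), the cross terms organize into expressions involving $\tr[(\xi^{B_{\rm in}})^2] = 2^{-H_2(B_{\rm in})_\xi}$, the purity of $\Phi^{AR}$ on $A$ (which is $2^{-k}$), and the dimension factors $d_A = 2^k$, $d_{B_{\rm in}} = 2^N$, $d_{S_{\rm in}} = 2^{N+k-\ell}$, $d_{S_{\rm rad}}=2^\ell$. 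After cancellations the dominant contribution is $\approx d_{S_{\rm in}} \cdot 2^{-k} \cdot 2^{-H_2(B_{\rm in})_\xi} / d_{B_{\rm in}}$ up to the $\sqrt{d_{S_{\rm in}}d_R}$ prefactor, and collecting exponents gives $\EH[\inf_\cD \Delta_{\rm q}] \leq 2^{(\ell_{\rm th}-\ell)/2}$ with $\ell_{\rm th} = k + (N - H_2(B_{\rm in})_\xi)/2$. This is the computation in~\cite{HP2007} extended to general $\xi^{B_{\rm in}}$ via the collision entropy, as noted in~\cite{DBWR2010}.

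The main obstacle is bookkeeping rather than conceptual: one must carefully track which subsystems get twirled (only $S = A B_{\rm in}$, via $U^S$) versus which are spectators ($R$ and $B_{\rm rad}$), correctly apply the Weingarten formula on the doubled space $S\otimes S$, and verify that the subleading term (proportional to the swap) is genuinely smaller so it can be absorbed. A secondary subtlety is justifying the decoupling inequality itself in the form above — i.e., that the decoder reconstructability of quantum information is governed by decoupling of the complementary system from the reference — but since this is a standard result (Theorem on decoupling, e.g.~\cite{D2005,DBWR2010}) it may be cited rather than reproved. Since the statement already says "it is straightforward to generalize," I would keep the argument terse, emphasizing the reduction to decoupling and quoting the second-moment computation.
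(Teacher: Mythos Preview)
The paper does not give its own proof of Theorem~\ref{Thm:HP}; it is stated as a known result from~\cite{HP2007,DBWR2010} and used only as a baseline. Your proposal correctly reconstructs the standard decoupling argument from those references: bound $\inf_{\cD}\Delta_{\rm q}$ by the decoupling distance $\|\rho^{S_{\rm in}R}_{\rm fin}-\rho^{S_{\rm in}}_{\rm fin}\otimes\pi^R\|_1$ via Uhlmann's theorem, pass to the Hilbert--Schmidt norm with the $\sqrt{d_{S_{\rm in}}d_R}$ prefactor, and evaluate the Haar second moment with the swap trick so that $\tr[(\xi^{B_{\rm in}})^2]=2^{-H_2(B_{\rm in})_\xi}$ appears. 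Collecting exponents then yields $2^{(\ell_{\rm th}-\ell)/2}$ as claimed. This is precisely the method the paper is citing, so your sketch is both correct and aligned with the intended (external) proof; the only cosmetic point is that the inequality relating the decoding error to the decoupling distance usually carries an explicit constant (or a square root, depending on the version of the decoupling/Uhlmann step one quotes), which should be tracked so that it is absorbed into the final bound.
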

This shows that the optimal decoding error on average is characterized by the collision entropy $H_2(B_{\rm in})_{\xi}$ of the initial state $\xi^{B_{\rm in}}$ of $B_{\rm in}$. As the entropy ranges from $0$ to $N$, the threshold $\ell_{\rm th}$ varies from $k$ to $k + N/2$. 
However, a decoder to achieve the bound was not explicitly provided since the analysis of~\cref{Thm:HP} was based on the decoupling approach. 
Later, two decoders were proposed. One is based on the Petz recovery map~\cite{PSSY2022}, and the other is in the form of quantum circuits but works only in a special case~\cite{YK2017}. The latter one was further extended to a more general one~\cite{UN2024}.

Note that~\cref{Thm:HP} also implies that, for any basis $W$, there exists a decoding measurement $M_W$ for classical information that leads to the decoding error such as
\begin{multline}
    \mbb{E}_{U \sim {\sf H}} \bigl[ \Delta_{{\rm cl}, W}(M_W|\xi, U) \bigr]
   \leq  \\
    \mbb{E}_{U \sim {\sf H}} \bigl[ \Delta_{\rm{q}}(\cD|\xi, U) \bigr] \leq 2^{(\ell_{\mathrm{th}} - \ell )/2}. \label{Eq:HPclassicalprevious}
\end{multline}
This is merely because classical errors are always smaller than quantum errors.

\subsubsection{Decoding the Hayden-Preskill protocol by the C-to-Q decoder with pPGMs} \label{SSS:HPresults}

We now apply the C-to-Q decoder constructed from pPGMs to the Hayden-Preskill protocol and investigate decoding errors for classical and quantum information.
This can be done by formulating the Hayden-Preskill protocol using the following quantum channel:
\begin{equation}
    \cT^{A \rightarrow B_{\rm rad} S_{\rm rad}}(\rho^A) = \tr_{S_{\rm in}}\bigl[ U^S (\rho^A \otimes \xi^{B_{\rm in} B_{\rm rad}}) U^{\dagger S} \bigr], \label{Eq:TinHP}
\end{equation}
where $S = A B_{\rm in} = S_{\rm in} S_{\rm rad}$ (see~\cref{Fig:decoderHP} (a) as well), and $d = \dim \cH^A=2^k$. 
This map can be decomposed into an ``encoding'' map $\cE^{A \rightarrow B_{\rm rad}S}$ and a ``noisy''map $\cN^{S \rightarrow S_{\rm rad}}$, such as $\cT^{A \rightarrow B_{\rm rad} S_{\rm rad}} = \cN^{S \rightarrow S_{\rm rad}} \circ \cE^{A \rightarrow B_{\rm rad}S}$, where
\begin{align}
    &\cE^{A \rightarrow B_{\rm rad}S}: \rho^A \mapsto U^S (\rho^A \otimes \xi^{B_{\rm in} B_{\rm rad}}) U^{\dagger S},\\
    &\cN^{S \rightarrow S_{\rm rad}} = \tr_{S_{\rm in}}.
\end{align}
We then consider two CQ codes $(X, \cE^{A \rightarrow B_{\rm rad}S}, M_X^{B_{\rm rad} S_{\rm rad}})$ and $(Z, \cE^{A \rightarrow B_{\rm rad}S}, M_Z^{B_{\rm rad} S_{\rm rad}})$: the former in the Pauli-$X$ basis and the latter in the Pauli-$Z$ basis.
Their classical decoding errors are denoted by $\Delta_{{\rm cl}, X}(M_X |\xi, U)$ and $\Delta_{{\rm cl}, Z}(M_Z |\xi, U)$, respectively. 

In the Hayden-Preskill protocol, the dynamics $U$ is commonly assumed to be Haar random except a few cases~\cite{NT2024}. Before we rely on this assumption, we explicitly write down the condition in~\cref{Prop:SC}.
To this end, we use the following states: for $W = X, Z$,
\begin{align}
&\xi_{W, j}^{B_{\rm rad}S_{\rm rad}}= \cT^{A \rightarrow B_{\rm rad} S_{\rm rad}}(\ketbra{j_W}{j_W}^A ), \label{Eq:34vlp'pmer}\\
&\xi_{\pi}^{B_{\rm rad}S_{\rm rad}}= \cT^{A \rightarrow B_{\rm rad} S_{\rm rad}}(\pi^A ).\label{Eq:34vlp'pmer2}
\end{align}
See~\cref{Fig:decoderHP} (b) for the diagrams of these states.
Then, a sufficient condition for decoding $W$-classical information by the corresponding pPGM $M_{{\rm pPGM}, W}$ is rewritten as
\begin{multline}
    \Delta_{{\rm cl},W}(M_{{\rm pPGM},W}|\cN \circ \cE) 
    \leq \\ \frac{1}{\lambda_{W, {\rm min}}} \biggl( 2^{k - H_2(B_{\rm rad}S_{\rm rad})_{\xi_{\pi}}}
    - \frac{1}{2^k} \sum_j 2^{- H_2(B_{\rm rad}S_{\rm rad})_{\xi_{W, j}}} \biggr).
\end{multline}
By combining this with~\cref{Thm:CtoQ} about the C-to-Q decoder, we obtain a criteria for decoding quantum information.
As the collision entropy has gravity interpretation~\cite{D2016}, the criteria may provide a new decoding criteria in terms of gravity and space-time geometry.

We now assume that $U^S$ is a Haar random unitary. In this case, we obtain the following result. See~\cref{SS:FHP} for the formal statement.

\begin{theorem}[Informal statement] \label{Thm:HPformal}
    Suppose that ${\rm rank}(\xi^{B_{\rm in}}) \lambda_{\rm min}(\xi^{B_{\rm in}}) <1/2$ and that
    \begin{equation}
        (N+k-\ell)2^{- (k + 2(\ell - \ell_{\mathrm{th}} ))} \approx 0, \label{Eq:infty1}
    \end{equation}
    for sufficiently large $N$. Then, the classical decoding error by the pPGM satisfies
    \begin{align}
        &\EH \bigl[ \Delta_{{\rm cl}, W}(M_{{\rm pPGM}}|\xi, U) \bigr]
        \lesssim 
        2^{2(\ell_{\rm th}-\ell)} , \label{Eq:coio34/p1}
    \end{align}
    for any basis $W$. The decoding error by the C-to-Q decoder satisfies
    \begin{align}
        &\EH \bigl[ \Delta_{\rm{q}}(\cD_{\rm CtoQ}|\xi, U) \bigr] 
        \lesssim 
        (1 + \sqrt{2}) 2^{\ell_{\rm th} -\ell}, \label{Eq:coio34/p}
    \end{align}
    where $\ell_{\rm th} = k+ \frac{N-H_2(B_{\rm in})_{\xi}}{2}$. 
\end{theorem}

This result improves the decoding errors in the Hayden-Preskill protocol compared to the previous result based on the decoupling approach, provided in~\cref{Thm:HP}. 
About decoding \emph{quantum} information, the improvement is by factor 2 in the error exponent, while the threshold $\ell_{\rm th}$ is the same as previous. The latter should be the case since the threshold corresponds to the quantum capacity of the Hayden-Preskill protocol and cannot be improved.

On the other hand, regarding \emph{classical} information, the improvement on the error exponent is by factor $4$ (see~\cref{Eq:HPclassicalprevious} for the previous bound). Although this is just an upper bound, the following proposition follows due to the nature of the C-to-Q decoder.

\begin{proposition} \label{Prop:HPclassicalOptimal}
    In the Hayden-Preskill protocol, the threshold value of $\ell$ for classical information to be decodable from the past and new radiations, $B_{\rm rad}$ and $S_{\rm rad}$ is $\ell_{\rm th} = k + \frac{N- H_2(B_{\rm rad})_{\xi}}{2}$.
\end{proposition}

\begin{Proof}
    Suppose that classical information is decodable from qubits fewer than $\ell_{\rm th}$. We can then construct a C-to-Q decoder that is able to decode quantum information from the same number of qubits, but this is not possible since $\ell_{\rm th}$ is the threshold for decoding quantum information. Hence, $\ell_{\rm th}$ is also the threshold for classical information. $\hfill \qed$
\end{Proof}

\begin{figure*}[tb!]
\centering
\includegraphics[width=.8\textwidth,clip]{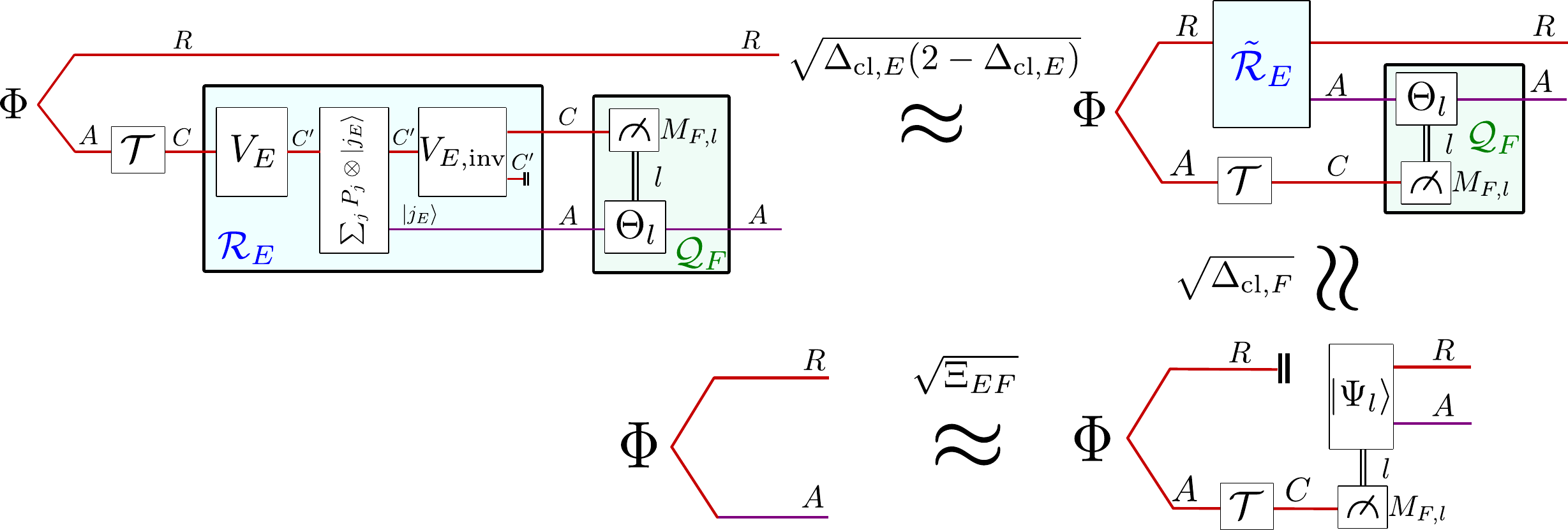}
\caption{A sketch of the proof of~\cref{Thm:CtoQ}. The approximation are in terms of the trace distance. For the first, second, and last ones, see~\cref{Eq:Xerror,Eq:asjif;jew42t,Eq:korefeko}, respectively.
}
\label{Fig:Proof}
\end{figure*}

\cref{Prop:HPclassicalOptimal} implies that the threshold value of $\ell$ for the classical information to be decodable is exactly the same as that for the quantum information.
The difference between the recovery of classical and quantum information in the Hayden-Preskill protocol is only in terms of the error exponent: the recovery error of classical information decreases as $\ell$ increases quadratically faster than that of quantum information.
This may be of surprise to some extent since classical information is in general easier to recover than quantum information.

In fact, these results lead to an interesting observation that a quantum black hole with Haar random dynamics is an optimal encoder of quantum information but is a poor encoder of classical information. This is best illustrated when the initial state $\xi$ of the initial black hole $B_{\rm in}$ is in a pure state, in which case $\ell_{\rm th}= k + N/2$.
While this number of qubits is necessary and sufficient for decoding classical information when it is encoded by the black hole dynamics, there exist better encoding schemes by which $k$-bit classical information are decodable merely from a little more than $k$ qubits. Thus, the black hole dynamics is too random to optimally encode classical information. 

We finally comment on the two assumptions on~\cref{Thm:HPformal}.
The first one, ${\rm rank}(\xi^{B_{\rm in}}) \lambda_{\rm min}(\xi^{B_{\rm in}}) <1/2$, is about the property of the state $\xi^{B_{\rm in}}$. This holds especially when all the non-zero eigenvalues of $\xi^{B_{\rm in}}$ are nearly equal, which is the case for any pure state or the completely mixed state. 
In contrast, the second one puts a mild but additional assumption that was not taken in previous studies. For instance, the second assumption is satisfied when $k>\log N$, or $k$ is constant and $\ell \geq \ell_{\rm th} + \log N$.
We strongly believe that these two assumption are just for a technical reason and should be removable.

\section{Proofs of Theorem~\ref{Thm:CtoQ} and Proposition~\ref{Prop:SC}} \label{S:CtoQProof}

\subsection{Proof of~\cref{Thm:CtoQ}} \label{SS:Thm1Proof}

We now prove~\cref{Thm:CtoQ}. 

\CtoQ*

In the proof, we denote the composed channel of $\cE^{A \rightarrow B}$ and $\cN^{B \rightarrow C}$ by $\cT^{A \rightarrow B}$. That is,
\begin{equation}
    \cT^{A \rightarrow B} := \cN^{B \rightarrow C} \circ \cE^{A \rightarrow B}.
\end{equation}
We also use the notation $\Phi^{CR}_{\cT} := \cT^{A \rightarrow C}(\Phi^{AR})$, where $\Phi^{AR}$ is the maximally entangled state between $A$ and $R$.

\begin{proof}[Proof of~\cref{Thm:CtoQ}]

As given in~\cref{Eq:defCtoQ}, the C-to-Q decoder consists of two quantum channels $\cR_E^{C\rightarrow CA}$ and $\cQ_F^{CA \rightarrow A}$ as
\begin{equation}
    \cD^{C \rightarrow A}_{\rm CtoQ} = \cQ_F^{CA \rightarrow A} \circ \cR_E^{C\rightarrow CA}.    
\end{equation}
We show that, when decoding errors for the CQ codes are small and the bases $E$ and $F$ are nearly mutually unbiased, these channels can be replaced with different ones acting on different systems, which eventually leads to~\cref{Thm:CtoQ}. See also~\cref{Fig:Proof}.

When $\Delta_{{\rm cl}, E}$ is small, it holds that
\begin{align}
    \cR_E^{C\rightarrow CA}(\Phi_{\cT}^{CR})
    &\approx
    \tilde{\cR}_E^{R \rightarrow R A}(\Phi_{\cT}^{CR}), \label{Eq:korer}
\end{align}
where $\tilde{R}_E^{R\rightarrow RA}  := \sum_{j} \ketbra{j_E^*}{j_E^*}^R \otimes \ket{j_E}^A$. 
To see this, notice from~\cref{Eq:cohmeas1,Eq:unitarize} that, since $\ket{e_0}^{C'}$ lies in the range of the isometry $V_E^{C\rightarrow C'}$, we have
\begin{align}
    \bra{e_0}^{C'} R_E^{C\rightarrow CC'A} &= \sum_{j=0}^{d-1} V_E^{C\rightarrow C' \dagger}P_j^{C'}V_E^{C\rightarrow C'}\otimes \ket{j_E}^A \\
    &= \sum_{j=0}^{d-1} M_{E,j}^{C}\otimes \ket{j_E}^A, \label{Eq:Odef}
\end{align}
which leads to
\begin{align}
&R_E^{C\rightarrow CC'A \dagger}\bigl(\tilde{R}_E^{R \rightarrow RA} \otimes \ket{e_0}^{C'}\bigr) \nonumber \\
&=  \biggl(  \sum_{j} M_{E,j}^{C} \otimes \bra{j_E}^{A} \biggr)\biggl(\sum_{i} \ketbra{i_E^*}{i_E^*}^R \otimes \ket{i_E}^A\biggr)\\
&= \sum_{j} M_{E,j}^{C} \otimes \ketbra{j_E^*}{j_E^*}^R.  \label{Eq:jyusyo}
\end{align}

We also mention that we can use another isometry $R_E^{'C \rightarrow CA}= \sum_{j=0}^{d-1} M_{E,j}^C \otimes \ket{j_E}^A + \sqrt{I^C - \sum_{j=0}^{d-1} (M_{E,j}^C)^2} \otimes \ket{{\rm fail}}^A$ instead of $R_E^{C\rightarrow CC'A}$. In fact, it holds that 
\begin{equation}
    R_E^{'C \rightarrow CA \dagger} \tilde{R}_E^{R \rightarrow RA} = \sum_{j} M_{E,j}^{C} \otimes \ketbra{j_E^*}{j_E^*}^R,
\end{equation}

Using a purification $\ket{\Phi_{\cT}}^{CRS}$ of $\Phi_{\cT}^{CR}$ by the system $S$, we obtain
\begin{widetext}
\begin{align}
    \bigl\| \cR_E^{C \rightarrow CA}(\Phi_{\cT}^{CR}) -  \tilde{\cR}_E^{R\rightarrow RA}(\Phi_{\cT}^{CR}) \bigr\|_1
    &\leq 
    \bigl\| R_E^{C \rightarrow CC'A}\Phi_{\cT}^{CRS} \bigl(R_E^{C \rightarrow CC'A }\bigr)^{\dagger} -  \tilde{\cR}_E^{R\rightarrow RA}(\Phi_{\cT}^{CRS}) \otimes \ketbra{e_0}{e_0}^{C'} \bigr\|_1\\
    &=
    2\sqrt{1 - \bigl|  \bra{\Phi_{\cT}}^{CRS} \bigl( R_E^{C \rightarrow CC'A } \bigr)^{\dagger} \tilde{R}_E^{R \rightarrow RA}  (\ket{\Phi_{\cT}}^{CRS} \otimes \ket{e_0}^{C'} )\bigr|^2}\\
    &=
    2\sqrt{1 - \bigl|  \bra{\Phi_{\cT}}^{CRS} \bigl(\sum_{j} M_{E,j}^{C} \otimes \ketbra{j_E^*}{j_E^*}^R \bigr)  \ket{\Phi_{\cT}}^{CRS} \bigr|^2},\\
    &=
    2\sqrt{1 -  \Bigl(\tr \bigl[ \Phi_{\cT}^{CR} \bigl(\sum_{j} M_{E,j}^{C} \otimes \ketbra{j_E^*}{j_E^*}^R \bigr)\bigr]\Bigr)^2},
\end{align}
\end{widetext}
where the first line follows from the monotonicity of the trace norm, the second from the fact that both states are pure, and the third from~\cref{Eq:jyusyo}.

As $\bra{j_E^*}^R\Phi_{\cT}^{CR} \ket{j_E^*}^R = d^{-1}\cT^{A \rightarrow C}(\ketbra{j_E}{j_E}^A)$, we have
\begin{multline}
    \tr \bigl[ \Phi_{\cT}^{CR} \bigl(\sum_{j} M_{E,j}^{C} \otimes \ketbra{j_E^*}{j_E^*}^R \bigr)\bigr] \\
    =
    \frac{1}{d} \sum_j \tr \bigl[\cT^{A \rightarrow C}(\ketbra{j_E}{j_E}^A) M_{E,j}^{C} \bigr] = 1- \Delta_{{\rm cl},E}.
\end{multline}
Thus, we arrive at
\begin{multline}
    \bigl\| \cR_E^{C \rightarrow CA}(\Phi_{\cT}^{CR}) -  \tilde{\cR}_E^{R\rightarrow RA}(\Phi_{\cT}^{CR}) \bigr\|_1 \\
    \leq 
    2 \sqrt{\Delta_{{\rm cl},E} ( 2 - \Delta_{{\rm cl},E})}. \label{Eq:Xerror}
\end{multline}

We next show that $\cQ_F^{CA \rightarrow A} \circ \tilde{\cR}_E^{R\rightarrow RA}(\Phi_{\cT}^{CR}) \approx \Psi^{AR}$ if $\Delta_{{\rm cl}, F}$ is small, where
\begin{equation}
    \Psi^{AR} := \sum_l q(l)\ketbra{\Psi_l}{\Psi_l}^{AR}, \label{Eq:defPsi}
\end{equation}
with $q(l) = \tr\bigl[\cT^{A\rightarrow C}(\pi^A) M_{F,l}^C \bigr]$
and
$\ket{\Psi_l}^{AR} := \sum_j |\braket{j_E}{l_F}| \ket{j_E}^A \otimes \ket{j_E^*}^R$.
To this end, we decompose the quantum channel $\cQ_F^{CA \rightarrow A}$ as $\cQ_F^{CA \rightarrow A} = \sum_l \cQ_{F, l}^{CA \rightarrow A}$, where $\cQ_{F, l}^{CA \rightarrow A}$ is trace non-increasing CP map given by
\begin{equation}
    \cQ_{F, l}^{CA \rightarrow A}(\rho^{CA}):= \Theta_l^A \tr_C\bigl[\rho^{CA} M_{F, l}^C \bigr] \Theta_l^{A \dagger}.
\end{equation}
Note that 
\begin{align}
    \tr \bigl[\cQ_{F,l}^{CA \rightarrow A} \circ \tilde{\cR}_E^{R\rightarrow RA}(\Phi_{\cT}^{CR}) \bigr]
    &= \tr \bigl[M_{F, l}^C \tilde{\cR}_E^{R\rightarrow RA}(\Phi_{\cT}^{CR}) \bigr]\\
    &=\tr \bigl[M_{F, l}^C \Phi_{\cT}^{CR} \bigr]\\
    &=\tr \bigl[M_{F, l}^C \cT^{A \rightarrow C}(\pi^A) \bigr]\\
    &=q(l),
\end{align}
where the first line follows from the unitarity of $\Theta_l$, and the second from the fact that $\tilde{R}^{R \rightarrow RA}$ is isometry.
We then have
\begin{widetext}
    \begin{align}
        \bigl\| \cQ_F^{CA \rightarrow A} \circ \tilde{\cR}_E^{R\rightarrow RA}(\Phi_{\cT}^{CR}) - \Psi^{AR} \bigr\|_1
        &\leq \sum_l q(l) \bigl\| q(l)^{-1}\cQ_{F,l}^{CA \rightarrow A} \circ \tilde{\cR}_E^{R\rightarrow RA}(\Phi_{\cT}^{CR}) - \Psi_l^{AR} \bigr\|_1\\
        &\leq 2 \sum_l q(l) \sqrt{1 - q(l)^{-1} \bra{\Psi_l}^{AR} \cQ_{F,l}^{CA \rightarrow A} \circ \tilde{\cR}_E^{R\rightarrow RA}(\Phi_{\cT}^{CR}) \ket{\Psi_l}^{AR}}\\
        &\leq 2 \sqrt{\sum_l  q(l) \Bigl( 1 - q(l)^{-1} \bra{\Psi_l}^{AR} \cQ_{F,l}^{CA \rightarrow A} \circ \tilde{\cR}_E^{R\rightarrow RA}(\Phi_{\cT}^{CR}) \ket{\Psi_l}^{AR}\Bigr)}\\
        &= 2 \sqrt{1 - \sum_l \bra{\Psi_l}^{AR} \cQ_{F,l}^{CA \rightarrow A} \circ \tilde{\cR}_E^{R\rightarrow RA}(\Phi_{\cT}^{CR}) \ket{\Psi_l}^{AR}},
        \label{Eq:lcji0verov}
    \end{align}
\end{widetext}
where the convexity of the trace norm is used for the first inequality, the Fuchs-van de Graaf inequality (\cref{FvG}) for the second, and the concavity of a square-root function for the last. Note that $\{ q(l) \}_l$ is a probability distribution.

We explicitly compute the inner product in the summation.
Using an adjoint map $\cE^{\dagger}$ of a quantum channel $\cE$, defined by $\tr[\cE(X)Y] = \tr[X\cE^{\dagger}(Y)]$, we have
\begin{multline}
\bra{\Psi_l}^{AR} \cQ_{F,l}^{CA \rightarrow A} \circ \tilde{\cR}_E^{R\rightarrow RA}(\Phi_{\cT}^{CR}) \ket{\Psi_l}^{AR}\\
=
\tr\bigl[\Phi_{\cT}^{CR}  (\cQ_{F,l}^{CA \rightarrow A} \circ \tilde{\cR}_E^{R\rightarrow RA}\bigr)^{\dagger}(\Psi_l^{AR}) \bigr]. \label{Eq:coergn}
\end{multline}
More concretely, as $\tilde{R}_E^{R\rightarrow RA}$ is an isometry, its adjoint is given by 
\begin{equation}
    \tilde{\cR}_E^{R \rightarrow RA \dagger}(\rho^{AR}) = \tilde{R}^{R \rightarrow RA \dagger}_E \rho^{AR} \tilde{R}^{R \rightarrow RA}_E.
\end{equation}
From~\cref{Eq:defC}, it is straightforward to see 
\begin{equation}
\cQ_{F,l}^{CA \rightarrow A \dagger}(\rho^A) = \bigl(\Theta_l^{A \dagger} \rho^{A} \Theta_l^{A} \bigr) \otimes M_{F,l}^C.
\end{equation}
From these explicit forms, we observe
\begin{multline}
    (\cQ_{F,l}^{CA \rightarrow A} \circ \tilde{\cR}_E^{R\rightarrow RA}\bigr)^{\dagger}(\Psi_l^{AR}) \\
    =
    \tilde{R}^{R \rightarrow RA \dagger}_E \Theta_l^{A \dagger} \Psi_l^{AR} \Theta_l^{A} \tilde{R}^{R \rightarrow RA}_E \otimes M_{F,l}^C.
\end{multline}
Recalling that $\Psi_l^{AR}$ is pure, we obtain
\begin{align}
    \tilde{R}^{R \rightarrow RA \dagger}_E \Theta_l^{A \dagger} \ket{\Psi_l}^{AR}
    &= \sum_j \bigl| \braket{l_F}{j_E}  \bigr| \bra{j_E} \Theta_l^{\dagger} \ket{j_E} \ket{j^*_E}^R\\
    &= \sum_j \bigl| \braket{l_F}{j_E} \bigr| e^{-i {\rm arg} \braket{j_E}{l_F}}  \ket{j^*_E}^R\\
    &= \sum_j \braket{j_E^*}{l_F^*}  \ket{j^*_E}^R\\
    &= \ket{l_F^*}^R,
\end{align}
where we used the definition of $\Theta_l$, given by~\cref{Eq:CorrTheta}, to obtain the second equality.
We hence have 
\begin{equation}
    (\cQ_{F,l}^{CA \rightarrow A} \circ \tilde{\cR}_E^{R\rightarrow RA}\bigr)^{\dagger}(\Psi_l^{AR})
    = \ketbra{l_F^*}{l_F^*}^R \otimes M_{F, l}^C,
\end{equation}
which, together with~\cref{Eq:coergn}, leads to
\begin{align}
    &\bra{\Psi_l}^{AR} \cQ_{F,l}^{CA \rightarrow A} \circ \tilde{\cR}_E^{R\rightarrow RA}(\Phi_{\cT}^{CR}) \ket{\Psi_l}^{AR} \notag\\
    &=\tr \bigl[ \Phi_{\cT}^{CR} (\ketbra{l_F^*}{l_F^*}^R \otimes M_{F, l}^C) \bigr].
\end{align}

As $\bra{l_F^*}^R\Phi_{\cT}^{AR} \ket{l_F^*}^R = \cT^{A \rightarrow C}(\ketbra{l_F}{l_F}^A)/d$, we arrive at
\begin{multline}
    \bra{\Psi_l}^{AR} \cQ_{F,l}^{CA \rightarrow A} \circ \tilde{\cR}_E^{R\rightarrow RA}(\Phi_{\cT}^{CR}) \ket{\Psi_l}^{AR}\\
    =\frac{1}{d}\tr \bigl[ \cT^{A \rightarrow C}(\ketbra{l_F}{l_F}^A) M_{F, l}^C \bigr].
\end{multline}
Substituting this into~\cref{Eq:lcji0verov}, we have
\begin{align}
    &\bigl\| \cQ_F^{CA \rightarrow A} \circ \tilde{\cR}_E^{R\rightarrow RA}(\Phi_{\cT}^{CR}) - \Psi^{AR} \bigr\|_1 \notag \\
    &\leq 
    2 \sqrt{1 - \frac{1}{d}\sum_l \tr\bigl[ \cT^{A \rightarrow C}\bigl( \ketbra{l_F}{l_F}^A \bigr) M_{F, l}^C \bigr]}\\
    &\leq 2 \sqrt{\Delta_{{\rm cl}, F}}. \label{Eq:asjif;jew42t}
\end{align}

Finally, we show that $\Psi^{AR} \approx \Phi^{AR}$ if $(E, F)$ is close to MUBs.
We begin with
\begin{align}
    \| \Psi^{AR} - \Phi^{AR} \|_1 &\leq 2 \sqrt{1 - \bra{\Phi}\Psi\ket{\Phi}}.
\end{align}
It is easily shown that
\begin{align}
    \bra{\Phi}\Psi\ket{\Phi}
    &=\sum_l q(l) |\braket{\Psi_l}{\Phi}|^2\\
    &=\sum_l q(l) \bigl( d^{-1/2} \sum_{j} |\braket{j_E}{l_F}| \bigr)^2\\
    &=\sum_l q(l) F_{BC}({\rm unif}_d, p_l),
\end{align}
where $p_l(j) = |\braket{j_E}{l_F}|^2$.
Recalling that $q(l) = \tr\bigl[\cT^{A \rightarrow C}(\pi^A) M_{F, l}^C \bigr]$, we obtain
\begin{align}
    \| \Psi^{AR} - \Phi^{AR} \|_1 &\leq 2 \sqrt{\Xi_{EF}}. \label{Eq:korefeko}
\end{align}

The statement of~\cref{Thm:CtoQ} follows simply from the triangle inequality:
\begin{widetext}
    \begin{align}
        \Delta_{{\rm cl}}(\cD_{\rm CtoQ}|\cT) 
        &=\frac{1}{2}\| \cQ_F^{CA\rightarrow A}\circ \cR_{E}^{C\rightarrow CA}(\Phi_{\cT}^{CR}) - \Phi^{AR} \bigr\|_1 \\
        \begin{split}
        &\leq \frac{1}{2} \bigl\|  \cQ_F^{CA \rightarrow A} \circ\bigl( \cR_E^{C \rightarrow CA} - \tilde{\cR}_{E}^{R\rightarrow RA}\bigr) (\Phi_{\cT}^{CR}) \bigr\|_1 +  \frac{1}{2}\| \cQ_F^{CA\rightarrow A}\circ  \tilde{\cR}_E^{R\rightarrow RA}(\Phi_{\cT}^{CR}) - \Psi^{AR} \|_1 \\
        & \hspace{2cm} +
        \frac{1}{2} \|\Psi^{AR} - \Phi^{AR}\|_1 
        \end{split} \\
        \begin{split}
        &\leq \frac{1}{2} \bigl\|  \bigl( \cR_E^{C \rightarrow CA} - \tilde{\cR}_{E}^{R\rightarrow RA}\bigr) (\Phi_{\cT}^{CR}) \bigr\|_1 +  \frac{1}{2}\| \cQ_F^{CA\rightarrow A}\circ  \tilde{\cR}_E^{R\rightarrow RA}(\Phi_{\cT}^{CR}) - \Psi^{AR} \|_1 \\
        & \hspace{2cm} +
        \frac{1}{2} \|\Psi^{AR} - \Phi^{AR}\|_1 
        \end{split} \\
        &\leq
        \sqrt{\Delta_{{\rm cl},E}(2-\Delta_{{\rm cl},E})} + \sqrt{\Delta_{{\rm cl},F}} + \sqrt{\Xi_{EF}},
    \end{align}
\end{widetext}
where the second inequality follows from the monotonicity of the trace distance and the last one from~\cref{Eq:Xerror,Eq:asjif;jew42t,Eq:korefeko}. 
\end{proof}

\subsection{Proof of~\cref{Prop:SC}} \label{S:pPGMs}

\SC*

In the proof, we do not explicitly write the superscript, such as $A$ and $C$, nor the basis $W$ in $\Delta_{{\rm cl}, W}$.
The pPGM is given by
\begin{equation}
M_{\rm pPGM} = \bigl\{M_j:= \Pi^{-1/2} \Pi_j \Pi^{-1/2} \bigr\}_j,
\end{equation}
where $\Pi_j$ is a projection onto the support of $\tau_j^C$ and $\Pi := \sum_j \Pi_j$.

\begin{proof}[Proof of~\cref{Prop:SC}]
    The error in decoding classical information is rewritten as
    \begin{align}
        &\Delta_{{\rm cl}} (M_{\rm pPGM}|\cN \circ \cE) = \frac{1}{d} \sum_{j} \tr\bigl[ ( I - M_j)\tau_j  \bigr]. 
    \end{align}
    In~\cref{App:Holevo}, we follow the original analysis by Holevo~\cite{H1998} and by Schumacher and Westmoreland~\cite{SW1998} and show that 
    \begin{equation}
        \Delta_{{\rm cl}}(M_{\rm pPGM}|\cN \circ \cE) \leq \frac{1}{d} \sum_{i \neq j} \tr \bigl[ \Pi_i \tau_j  \bigr]. \label{Eq:HN}
    \end{equation}
    Since $\lambda_{\rm min} (\tau_i) \Pi_i \leq \tau_i$, where $\lambda_{\rm min} (\tau_i)$ is the minimum non-zero eigenvalue of $\tau_i$, we have 
    \begin{align}
        \Delta_{{\rm cl}}(M_{\rm pPGM}|\cN \circ \cE)
        &\leq \frac{1}{d} \sum_{i \neq j} \frac{\tr \bigl[ \tau_i \tau_j \bigr]}{ \lambda_{\rm min} (\tau_i)} \\
        &\leq \frac{1}{d \lambda_{\rm min}} \sum_{i \neq j} \tr \bigl[ \tau_i \tau_j \bigr] \\
        &= \frac{1}{\lambda_{\rm min}} \biggl\{ d \tr \bigl[  \tau_{\pi}^2 \bigr] -  \frac{1}{d}\sum_{j}\tr \bigl[ \tau_j^2 \bigr] \biggr\},
    \end{align}
    where $\lambda_{\rm min} = \min_j \lambda_{\rm min} (\tau_j)$ and we used $\tau_{\pi} = d^{-1}\sum_j \tau_j$. This completes the proof.
\end{proof}

\section{Decoding errors in the Hayden-Preskill protocol} \label{S:ErrorHP}

We provide an in-depth analysis of decoding the Hayden-Preskill protocol by the C-to-Q decoder with pPGMs. 
To this end, we introduce a family of the Hayden-Preskill protocol in~\cref{SS:FHP}, which is to clearly understand different asymptotic limits of the protocol, and provide the formal statement of~\cref{Thm:HPformal}.
We then provide a single-shot analysis in~\cref{SS:rrccr}. Based on these, we provide a proof of~\cref{Thm:HPformal} in~\cref{SS:owari}.

\subsection{A family of the Hayden-Preskill protocol} \label{SS:FHP}

To fully understand the protocol, we introduce a sequence of Hayden-Preskill protocols labeled by $n=1,2,\ldots$. This allows us to argue different asymptotic limits simultaneously. Note that the Hayden-Preskill protocol has multiple independent parameters, such as $k$, $N$, $\ell$, and $\xi$, and hence, various asymptotic limits can be considered.
We denote the parameters for the $n$-th protocol by $(N_n, k_n, \ell_n, \xi_n)$ and introduce
\begin{align}
    &\ell_{{\rm th}, n} = k_n + \frac{N_n - H_2(B_{\rm in})_{\xi_n}}{2},\\
    & \Lambda_{\xi_n} = {\rm rank}(\xi^{B_{\rm in}}_n) \lambda_{\rm min}(\xi^{B_{\rm in}}_n) \in (0, 1].
\end{align}

From the previous result based on the decoupling approach (\cref{Thm:HP}), it immediately follows that, for every $n$, there exist a decoding map $\cD_n$ and decoding POVMs $M_{W,n}$ for any basis $W$ such that
\begin{multline}
    \mbb{E}_{U_n \sim {\sf H}_n} \bigl[ \Delta_{{\rm cl}, W}(M_{W,n}|\xi_n, U_n) \bigr]
    \le  \\
    \mbb{E}_{U_n \sim {\sf H}_n} \bigl[ \Delta_{\rm{q}}(\cD_n|\xi_n, U_n) \bigr] \le 2^{(\ell_{\mathrm{th},n} - \ell_n )/2}.
\end{multline}

In the $n$-th protocol, we consider decoding $W$-classical information by pPGMs $M_{{\rm pPGM}, W, n}$ ($W = Z, X$) and quantum information by the C-to-Q decoder $\cD_{{\rm CtoQ},n}$ constructed from the pPGM for decoding the $Z$- and $X$-classical information. These decoders are all dependent on $\xi_n$ as well as $U_n$.
The decoding errors are denoted by $\Delta_{{\rm cl}, W}(M_{{\rm pPGM},W,n}|\xi_n, U_n)$ for $W$-classical information and by $\Delta_{\rm{q}}(\cD_{{\rm CtoQ},n}|\xi_n, U_n)$ for quantum information. For simplicity, we hereafter omit the basis $W$ in the subscript of the pPGM $M_{{\rm pPGM}, W, n}$ when we refer to the classical decoding error, such as $\Delta_{{\rm cl}, W}(M_{{\rm pPGM},n}|\xi_n, U_n)$.

Taking the average over a Haar random unitary $U_n \sim {\sf H}_n$, we define decoding errors on average:
\begin{align}
    &\overline{\Delta_{\mathrm{cl},n }} = \mbb{E}_{U_n \sim {\sf H}_n} \bigl[ \Delta_{{\rm cl}, W}(M_{{\rm pPGM},n}|\xi_n, U_n) \bigr],\\
    &\overline{\Delta_{\mathrm{q},n }} = \mbb{E}_{U_n \sim {\sf H}_n} \bigl[ \Delta_{\rm{q}}(\cD_{{\rm CtoQ},n}|\xi_n, U_n) \bigr].
\end{align}
Note that, due to the unitary invariance of the Haar measure, the average decoding error $\overline{\Delta_{\mathrm{cl},n }}$ for $W$-classical information does not depend on the choice of the basis.

The main result about the Hayden-Preskill protocol is given as follows.\\

\noindent
\textbf{\cref{Thm:HPformal}.}
\emph{For any sequence of Hayden-Preskill protocols that satisfies $\epsilon := \limsup_{n\to\infty} \{ 2(1- \Lambda_{\xi_n}  )  \} <1 $ and
    \begin{equation}
        \lim_{n\to\infty} (N_n+k_n-\ell_n)2^{- (k_n + 2(\ell_n - \ell_{\mathrm{th},n} ))} = 0, \label{Eq:ccccceeer}
    \end{equation}
    it holds that
    \begin{equation}
        \limsup_{n\to\infty} \left\{ \overline{\Delta_{\mathrm{cl},n}} \biggl(\frac{2^{2(\ell_{\mathrm{th},n} - \ell_n )}}{1-\epsilon}\biggr)^{-1}
        \right\} \le 1, \label{Eq:ccccc}
    \end{equation}
    and
    \begin{equation}
        \limsup_{n\to\infty} \left\{ \overline{\Delta_{\mathrm{q},n}} \biggr(\sqrt{\frac{2^{2(\ell_{\mathrm{th},n} - \ell_n )}}{1-\epsilon}}\biggr)^{-1}
        \right\} \le 1+\sqrt{2}. \label{Eq:qqqqqqqq}
    \end{equation}
}\\

This can be obtained by the two steps. We provide a single-shot analysis in~\cref{SS:rrccr} and prove~\cref{Thm:HPformal} in~\cref{SS:owari}.

\subsection{A single-shot analysis on the decoding errors} \label{SS:rrccr}

In this subsection, we provide a proposition about the decoding errors in the Hayden-Preskill protocol for a fixed $n$. We hence omit the labeling $n$ of the family.

\begin{proposition} \label{Thm:result3}
Suppose that $\Lambda_{\xi}:={\rm rank}(\xi^{B_{\rm in}}) \lambda_{\rm min}(\xi^{B_{\rm in}})>1/2$, where $\lambda_{\rm min}$ denotes the minimum non-zero eigenvalue.
For any constants $\epsilon \in \bigl(2 (1 - \Lambda_{\xi} ), 1 \bigr]$, it holds for any basis $W$ that
\begin{align}
&\EH \bigl[ \Delta_{{\rm cl}, W}(M_{{\rm pPGM},W}|\xi, U) \bigr]  \leq
\frac{4^{\ell_{\rm th}-\ell} }{1-\epsilon} + \delta, \\
&\EH[\Delta_{\rm{q}}(\cD_{\rm CtoQ}|\xi, U)]  \leq  \bigl(1 + \sqrt{2}\bigr)\sqrt{\frac{4^{\ell_{\rm th}-\ell} }{1-\epsilon} + \delta},
\end{align}
where $\ell_{\rm th} = k+ \frac{N-H_2(B_{\rm in})_{\xi}}{2}$, $\delta$ is given by
\begin{multline}
    \log \delta
    :=
    k + 2^{N+k-\ell+1} \biggl(N + k - \ell+ \log \frac{5}{\epsilon} \biggr) \\
- \frac{c^2 \log e}{6}2^{\ell + H_2(B_{\rm in})_{\xi}},
\end{multline}
and $c=1 - (1-\epsilon/2)/\Lambda_{\xi}$.
\end{proposition}

To show~\cref{Thm:result3}, we first show a general property about the minimum eigenvalue of a marginal state after the application of a Haar random unitary. In~\cref{Prop:conc}, the labeling of the systems, $A$ and $B$, are general and not those in the Hayden-Preskill protocol.

\begin{proposition} \label{Prop:conc}
Let $AB$ be a composite system, and $d_A$ and $d_B$ be the dimensions of $A$ and $B$, respectively. For a mixed state $\rho^{AB}$ with rank $r$ and a Haar random unitary  $U^{AB}$, let $\rho_U^{B}$ be $\tr_A[\cU^{AB}(\rho^{AB})]$.
For any $\epsilon>2 (1 - r \lambda_{\rm min}(\rho^{AB}))$, 
\begin{multline}
{\rm Prob}_{U \sim \sf H}\biggl[ \lambda_{\rm min} \bigl( \rho_U^{B} \bigr) < \frac{1- \epsilon}{d_B} \biggr] \\
\leq \biggl( \frac{5 d_B}{\epsilon} \biggr)^{2 d_B} \exp\biggl[ - \frac{r d_A \delta_{\rho}(\epsilon/2)^2}{6} \biggr].
\end{multline}
Here, $\delta_{\rho}(\epsilon):= 1 - \frac{1- \epsilon}{r \lambda_{\rm min}(\rho^{AB})}$.
\end{proposition}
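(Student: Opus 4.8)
\textbf{Proof proposal for Proposition~\ref{Prop:conc}.}

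The plan is to control $\lambda_{\rm min}(\rho_U^B)$ uniformly over all unit vectors $\ket{\psi}^B$ by studying the random variable $\bra{\psi}\rho_U^B\ket{\psi}$ for a fixed $\ket{\psi}$, then taking a union bound over an $\epsilon$-net of the (complex) unit sphere in $\cH^B$. For a fixed unit vector $\ket{\psi}^B$, write $\bra{\psi}^B \rho_U^B \ket{\psi}^B = \tr\bigl[(\ketbra{\psi}{\psi}^B \otimes I^A)\, U\rho^{AB}U^\dagger\bigr]$. Diagonalising $\rho^{AB}=\sum_{m=1}^r \lambda_m \ketbra{\phi_m}{\phi_m}$, this becomes $\sum_m \lambda_m \bra{\phi_m}U^\dagger(\ketbra{\psi}{\psi}^B\otimes I^A)U\ket{\phi_m}$, a weighted sum of $r$ terms, each of the form $\|(\bra{\psi}^B\otimes I^A)U\ket{\phi_m}\|^2$. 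Since $U\ket{\phi_m}$ is a Haar-random unit vector in $\cH^{AB}$ (for a single fixed $m$), each such term has mean $1/d_B$; the expectation of the whole sum is $(\sum_m \lambda_m)/d_B = 1/d_B$. First I would establish this mean computation, then invoke a concentration inequality — a Levy-type / measure-concentration bound on the unitary group, or equivalently Lipschitz concentration for the function $U\mapsto \bra{\psi}\rho_U^B\ket{\psi}$, which is Lipschitz with constant controlled by $\|\rho^{AB}\|_\infty = \lambda_{\max}(\rho^{AB})$ — to get that for a single $\ket{\psi}$,
\begin{equation}
{\rm Prob}\Bigl[ \bra{\psi}\rho_U^B\ket{\psi} < \tfrac{1-\epsilon/2}{d_B}\Bigr] \leq \exp\Bigl[-\tfrac{r d_A\, \delta_\rho(\epsilon/2)^2}{6}\Bigr],
\end{equation}
where the exponent's shape ($r d_A$ in the numerator) comes from the effective dimension $r d_A$ of the relevant sphere — equivalently, one can think of sampling $r$ independent Haar vectors in $\cH^B$ of dimension $d_B$ but the relevant concentration scale is set by the product structure. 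The factor $\delta_\rho(\epsilon/2) = 1 - \tfrac{1-\epsilon/2}{r\lambda_{\rm min}(\rho^{AB})}$ is positive precisely under the hypothesis $\epsilon > 2(1-r\lambda_{\rm min}(\rho^{AB}))$, and it measures the gap between the ``flat'' value $1/(r\lambda_{\rm min})$ one would get if $\rho^{AB}$ were as spread out as possible and the target threshold.

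Next I would pass from a single $\ket{\psi}$ to all of them. Take a $\delta$-net $\cN$ of the unit sphere of $\cH^B$ with $|\cN| \leq (1 + 2/\delta)^{2d_B} \leq (5/\delta)^{2d_B}$ (the exponent $2d_B$ because $\cH^B$ is complex $d_B$-dimensional, hence real $2d_B$-dimensional), choosing $\delta \asymp \epsilon$ (concretely something like $\delta = \epsilon/(4\lambda_{\max}(\rho^{AB})d_B)$ or similar, absorbed into constants) so that the net-approximation error in the quadratic form $\ket{\psi}\mapsto\bra{\psi}\rho_U^B\ket{\psi}$ is at most $\epsilon/(2d_B)$. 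A union bound over $\cN$ then gives that, except with probability at most $(5/\delta)^{2d_B}\exp[-rd_A\delta_\rho(\epsilon/2)^2/6]$, every net point $\ket{\psi}\in\cN$ satisfies $\bra{\psi}\rho_U^B\ket{\psi} \geq (1-\epsilon/2)/d_B$; combined with the net bound, every unit vector then satisfies $\bra{\psi}\rho_U^B\ket{\psi} \geq (1-\epsilon)/d_B$, i.e. $\lambda_{\rm min}(\rho_U^B) \geq (1-\epsilon)/d_B$. Rewriting $(5/\delta)^{2d_B}$ as $(5d_B/\epsilon)^{2d_B}$ up to the absorbed constants yields exactly the claimed bound.

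The main obstacle I anticipate is getting the concentration step quantitatively right: one must identify the correct Lipschitz constant of $U \mapsto \bra{\psi}\rho_U^B\ket{\psi}$ and feed it into a concentration-of-measure statement on $\mathfrak{U}(d_Ad_B)$ (or on the relevant Grassmannian/sphere) so that the resulting exponent comes out as $r d_A \delta_\rho(\epsilon/2)^2 / 6$ rather than with a worse dimension factor or constant. A clean route is to use Lévy's lemma on the sphere $S^{2d_Ad_B-1}$ for each eigenvector $\ket{\phi_m}$ and then handle the weighted sum, but some care is needed because the $r$ vectors $U\ket{\phi_m}$ are not independent — they are orthonormal. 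One typically circumvents this either by a direct concentration argument for the function on $\mathfrak{U}$ (which handles the correlations automatically) or by a standard decoupling/Weingarten-plus-concentration argument; either way, the dependence structure and the precise constant ($1/6$) are where the bookkeeping is delicate. The choice of net fineness $\delta$ and the verification that the quadratic form is Lipschitz in $\ket{\psi}$ with constant $\lambda_{\max}(\rho_U^B)\le 1$ are comparatively routine. I would also double-check the edge case $\epsilon = 1$ (allowed by the hypothesis), where the threshold $(1-\epsilon)/d_B = 0$ and the statement becomes the (still nontrivial) claim that $\rho_U^B$ has full rank with high probability.
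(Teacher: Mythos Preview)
Your overall architecture (fix $\ket{\psi}$, prove a single-vector concentration bound, union-bound over an $\epsilon$-net in $\cH^B$) matches the paper exactly, and your net bookkeeping is essentially what the paper does: it takes an $(\epsilon/d_B)$-net in the trace norm on pure states, of cardinality at most $(5d_B/\epsilon)^{2d_B}$, and uses the trivial Lipschitz estimate $|\bra{\varphi}\rho_U^B\ket{\varphi}-\bra{\varphi'}\rho_U^B\ket{\varphi'}|\le \tfrac{1}{2}\|\varphi-\varphi'\|_1$ to pass from the net to all unit vectors with a loss of $\epsilon/(2d_B)$.

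The gap is in the single-vector concentration step. You propose to run Lipschitz/L\'evy concentration directly on $U\mapsto \bra{\psi}\rho_U^B\ket{\psi}$, with Lipschitz constant controlled by $\lambda_{\max}(\rho^{AB})$. That route does not produce the exponent $r d_A\,\delta_\rho(\epsilon/2)^2/6$: a straightforward Lipschitz estimate gives a constant of order $\|\rho\|_2$ or $\lambda_{\max}\sqrt{d_A}$, and feeding that into standard unitary-group concentration yields an exponent that is off by a factor of $d_B$ and depends on the wrong spectral quantity. In particular there is no mechanism in your sketch for $\lambda_{\rm min}(\rho^{AB})$, rather than $\lambda_{\max}$ or $\|\rho\|_2$, to appear, yet $\delta_\rho$ is built from $\lambda_{\rm min}$.

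The paper's missing trick is a one-line reduction to projections: since $\rho^{AB}\ge \lambda_{\rm min}(\rho^{AB})\,R^{AB}$ with $R$ the rank-$r$ support projection, one has
\[
\bra{\psi}\rho_U^B\ket{\psi}\;\ge\;\lambda_{\rm min}(\rho^{AB})\,\tr\bigl[(I^A\otimes\ketbra{\psi}{\psi}^B)\,U R U^\dagger\bigr].
\]
Now the random variable is $\tr[U S U^\dagger Q]$ with $S=R$ (rank $r$) and $Q=I^A\otimes\ketbra{\psi}{\psi}$ (rank $d_A$), for which the paper invokes the projection-overlap concentration lemma of Hayden--Leung--Winter (Lemma~III.5 in \cite{HLW2006}): ${\rm Prob}[\tr[USU^\dagger Q]<(1-\delta)\,sq/d]\le \exp[-sq\delta^2/6]$. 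Here $sq=r d_A$ and $d=d_Ad_B$, so the mean is $r/d_B$, and demanding $\lambda_{\rm min}\cdot(1-\delta)r/d_B \ge (1-\epsilon/2)/d_B$ forces $\delta=\delta_\rho(\epsilon/2)$ exactly. This is what makes both the factor $r d_A$ and the specific shape of $\delta_\rho$ fall out, with no need to worry about the correlations among $U\ket{\phi_m}$ that you flagged.
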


\cref{Prop:conc} is a generalization of Lemma III.4 in Ref.~\cite{HLW2006} to the case of mixed states and can be shown by the same proof technique using the \emph{$\epsilon$-net}~\cite{HLSW2004}.

\begin{definition}[$\epsilon$-net]
For $\epsilon>0$, a set $N(\epsilon)$ of pure states in a Hilbert space $\cH$ is called an $\epsilon$-net if
\begin{multline}
\forall \ket{\varphi} \in \cH,\ \ \exists \ket{\varphi'} \in N(\epsilon),\\ \text{such that}\ \ \bigl\| \ketbra{\varphi}{\varphi} - \ketbra{\varphi'}{\varphi'} \bigr\|_1 \leq \epsilon.
\end{multline}
\end{definition}

It is known that there exists an $\epsilon$-net with a sufficiently many but finite number of pure states.

\begin{theorem}\label{Thm:net}
There exists an $\epsilon$-net $N(\epsilon)$ such that $|N(\epsilon)| \leq ( 5/\epsilon )^{2 d}$, where $d := \dim \cH$.
\end{theorem}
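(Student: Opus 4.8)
Here the plan is to run the standard volume (packing) argument for covering numbers of the Euclidean sphere, after reducing the trace-norm condition on pure states to a Euclidean condition on their unit-vector representatives. First I would identify the pure states on $\cH$ with the unit vectors in $\mbb{C}^d \cong \mbb{R}^{2d}$, up to a global phase. The key elementary estimate is that for unit vectors $u, u' \in \cH$ one has $\bigl\| \ketbra{u}{u} - \ketbra{u'}{u'} \bigr\|_1 = 2\sqrt{1 - |\braket{u}{u'}|^2}$, and since $\| u - u' \|_2^2 = 2 - 2\,\mathrm{Re}\braket{u}{u'}$, a bound $\| u - u' \|_2 \le \delta$ forces $\mathrm{Re}\braket{u}{u'} \ge 1 - \delta^2/2$, hence $|\braket{u}{u'}| \ge 1 - \delta^2/2$ and therefore $\bigl\| \ketbra{u}{u} - \ketbra{u'}{u'} \bigr\|_1 \le 2\delta$. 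Thus it suffices to produce a set of unit vectors that is a $\delta$-net of the real sphere $S^{2d-1}$ in Euclidean distance with $\delta := \epsilon/2$, and then output the corresponding rank-one projectors.

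Next I would construct such a net by a maximal packing. Let $\{u_i\}$ be a maximal subset of $S^{2d-1}$ whose points are pairwise at Euclidean distance strictly greater than $\delta$. Maximality implies every unit vector lies within distance $\delta$ of some $u_i$ (otherwise it could be added to the packing), so $\{u_i\}$ is a $\delta$-net. To bound its cardinality, observe that the open balls $B(u_i, \delta/2)$ are pairwise disjoint and all contained in $B(0, 1 + \delta/2) \subset \mbb{R}^{2d}$; comparing Lebesgue volumes, which scale as the $2d$-th power of the radius, gives $|\{u_i\}| \le (1 + \delta/2)^{2d}/(\delta/2)^{2d} = (1 + 2/\delta)^{2d} = (1 + 4/\epsilon)^{2d}$. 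For the regime of interest $\epsilon \in (0,1]$ one has $1 + 4/\epsilon \le 5/\epsilon$, which yields $|N(\epsilon)| \le (5/\epsilon)^{2d}$.

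The argument is entirely standard (cf.~\cite{HLSW2004}), so I do not expect a genuine obstacle; the only point requiring a little care is the global-phase ambiguity in the representation of a pure state by a unit vector. This is harmless because the plain Euclidean distance between two chosen representatives only over-estimates the phase-optimized quantity that controls $|\braket{u}{u'}|$, so a Euclidean $\delta$-net of $S^{2d-1}$ automatically induces a trace-norm $2\delta$-net of the pure states. The remaining bookkeeping is simply to track the constant so that it comes out as $5$ rather than a slightly larger value, which is what the assumption $\epsilon \le 1$ buys; the resulting bound is not tight, but that is immaterial for the application to Proposition~\ref{Prop:conc}.
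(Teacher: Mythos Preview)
Your argument is correct and is precisely the standard volume/packing argument underlying this result. The paper does not supply its own proof of Theorem~\ref{Thm:net}; it is stated as a known fact (the $\epsilon$-net construction being attributed to~\cite{HLSW2004}), so there is nothing further to compare against.
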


We also use a concentration of measure phenomena for a Haar random unitary.

\begin{lemma}[Lemma III.5 in~\cite{HLW2006}]\label{Lem:CoM}
For projections $S, Q$ on a $d$-dimensional Hilbert space $\cH$ and $\epsilon>0$, it follows that 
\begin{equation}
{\rm Prob}_{ U \sim {\sf H}}\biggl[ \tr\bigl[USU^{\dagger}Q \bigr] < (1-\epsilon)\frac{sq}{d} \biggr]
\leq \exp \biggl[ -\frac{s q \epsilon^2}{6} \biggr],
\end{equation}
where $s$ and $q$ are the rank of $S$ and $Q$, respectively.
\end{lemma}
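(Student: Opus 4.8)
The plan is to read this as a one-sided (lower-tail) measure-concentration statement for the real-valued function $f(U) := \tr[U S U^{\dagger} Q]$ on the unitary group $\mathfrak{U}(d)$ equipped with the Haar measure ${\sf H}$, and to prove it by the standard recipe: (i) compute the Haar average of $f$; (ii) control how fast $f$ varies along the group; and (iii) feed both into a concentration inequality for Lipschitz functions on $\mathfrak{U}(d)$. The target threshold $(1-\epsilon)sq/d$ will be read as a deviation below the mean.

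First I would fix the mean. By the left/right invariance of ${\sf H}$ the twirl of $S$ is $\EH[U S U^{\dagger}] = (\tr S/d)\,I = (s/d)\,I$, hence $\EH[f] = (s/d)\tr Q = sq/d$. Thus the event in the statement is a lower deviation of size $\delta := \epsilon\, sq/d$ below the mean, i.e. $(1-\epsilon)sq/d = \EH[f] - \delta$.

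Next I would bound the gradient of $f$. Differentiating along a geodesic $U_t = e^{itH}U$ gives $\frac{d}{dt}f(U_t)\big|_{t=0} = i\,\tr\!\big[H\,[U S U^{\dagger},Q]\big]$, so the relevant gradient norm (with respect to the Hilbert--Schmidt metric) is $\big\|[U S U^{\dagger},Q]\big\|_2$. Writing $A := U S U^{\dagger}$ and using $A^2=A$, $Q^2=Q$, a short computation yields the exact identity
\[
\big\|[A,Q]\big\|_2^2 = 2\big(\tr[AQ] - \tr[(AQ)^2]\big),
\]
which, in terms of the principal angles $\theta_i$ between $\mathrm{ran}\,A$ and $\mathrm{ran}\,Q$, equals $2\sum_i \cos^2\theta_i\sin^2\theta_i \le \tfrac12\min(s,q)$. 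This supplies a global Lipschitz constant $L$ with $L^2 \le \tfrac12\min(s,q)$, and a concentration inequality of the form ${\rm Prob}_{U\sim{\sf H}}[f \le \EH[f] - \delta] \le \exp(-c\,d\,\delta^2/L^2)$ would then produce an exponent proportional to $\epsilon^2 sq\,\max(s,q)/d$.

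Here lies the main obstacle. The naive worst-case constant $L^2 = \tfrac12\min(s,q)$ only reproduces the clean, regime-independent exponent $sq\epsilon^2/6$ when $\max(s,q)$ is comparable to $d$; for $s,q \ll d$ it is far too weak. The extra leverage is the \emph{self-bounding} feature already visible above, $\big\|[A,Q]\big\|_2^2 \le 2\,\tr[AQ] = 2 f(U)$: the gradient is small precisely where $f$ is small, and since $f \approx sq/d$ on the bulk of the measure, the effective Lipschitz constant there is $\approx \sqrt{2sq/d}$, which is exactly what turns $c\,d\,\delta^2/L^2$ into a quantity of order $sq\epsilon^2$. Converting this self-bounding gradient estimate into a rigorous lower tail --- rather than a heuristic about ``typical'' $U$ --- is the delicate step, and I would carry it out by a self-bounding/entropy-method (modified log-Sobolev) argument exploiting the positive Ricci curvature of $\mathfrak{U}(d)$, whose log-Sobolev constant scales like $d$ and so cancels the $1/d^2$ in $\delta^2$ against $\mathrm{Var}\lesssim sq/d^2$ to leave $sq\epsilon^2$ in the exponent. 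As a consistency check, in the rank-one case $q=1$ one has $f(U) = \langle\psi|U S U^{\dagger}|\psi\rangle = \sum_{i\le s}|v_i|^2$ for the Haar-random unit vector $v = U^{\dagger}|\psi\rangle$, which is exactly $\mathrm{Beta}(s,d-s)$-distributed with mean $s/d$ and a sharp Chernoff lower tail of the required form. The mean and gradient computations are routine; the remaining work is bookkeeping the numerical constant to land on $6$.
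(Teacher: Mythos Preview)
The paper does not supply a proof: the lemma is quoted from~\cite{HLW2006} and used as a black box in the proof of Proposition~\ref{Prop:conc}. So there is nothing in the present paper to compare against.

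That said, your plan is essentially the original HLW argument, except that you make the final step harder than necessary. Your self-bounding inequality $\|[USU^\dagger,Q]\|_2^2 \le 2f(U)$ is \emph{equivalent} to the statement that $g(U):=\sqrt{f(U)}=\|QUS\|_2$ has a uniform Lipschitz bound: indeed $|\nabla g| = |\nabla f|/(2\sqrt f)\le 1/\sqrt 2$, or more directly $\bigl|\,\|QUS\|_2-\|QVS\|_2\,\bigr|\le\|Q(U{-}V)S\|_2\le\|U{-}V\|_2$ since $Q,S\le I$. The HLW proof simply applies Gaussian concentration on $\mathfrak U(d)$ to $g$ rather than to $f$; the relevant deviation for the event $g<\sqrt{(1-\epsilon)sq/d}$ is of order $\epsilon\sqrt{sq/d}$, and $d\,t^2$ produces the exponent $\sim sq\,\epsilon^2$ uniformly in all regimes. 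Your proposed detour through an entropy/self-bounding argument for $f$ itself is not wrong in spirit, but the textbook self-bounding machinery lives in product spaces, and porting it to $\mathfrak U(d)$ via modified log-Sobolev is real work that the square-root substitution makes unnecessary. The only remaining bookkeeping in either route is relating $\mathbb E[g]$ to $\sqrt{\mathbb E[g^2]}=\sqrt{sq/d}$, which follows from bounding $\mathrm{Var}(g)=O(1/d)$ by the same concentration (or Poincar\'e) inequality.
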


Based on~\cref{Thm:net} and~\cref{Lem:CoM}, we show~\cref{Prop:conc}. The proof is almost the same as that in~\cite{HHL2004}.

\begin{proof}[Proof of~\cref{Prop:conc}]
By definition, the minimum non-zero eigenvalue is given by
$\lambda_{\rm min}(\rho_U^B)
=
\min_{\ket{\varphi} \in_+ \rho_U} \bra{\varphi} \rho_U^B \ket{\varphi}$,
where $\min_{\ket{\varphi} \in_+ \rho_U}$ represents the minimization over all pure states $\ket{\varphi}^B$ in the support of $\rho_U^B$.
From~\cref{Thm:net}, there exists an $\epsilon/d_B$-net $N_B(\epsilon/d_B)$ of the Hilbert space $\cH^B$ with $|N_B(\epsilon/d_B)| \leq (5 d_B/\epsilon)^{2d_B}$. Hence, we have
\begin{equation}
\lambda_{\rm min}(\rho_U^B)
\geq
\min_{\ket{\varphi} \in N_B(\epsilon/d_B)} \bra{\varphi} \rho_U^B \ket{\varphi} - \frac{\epsilon}{2 d_B}.
\end{equation}
Here, we used the fact that, if $\| \ketbra{\psi}{\psi} -\ketbra{\psi'}{\psi'}\|_1 \leq \epsilon$, then $| \bra{\psi} \sigma \ket{\psi} - \bra{\psi'} \sigma \ket{\psi'} | \leq \epsilon/2$ for any state $\sigma$.
This implies 
\begin{multline}
{\rm Prob}_{ U \sim {\sf H}}\biggl[ \lambda_{\rm min} \bigl( \rho_U^{B} \bigr) < \frac{1- \epsilon}{d_B} \biggr]\\ \leq 
{\rm Prob}_{ U \sim {\sf H}}\biggl[
\min_{\ket{\varphi} \in N_B(\epsilon/d_B)} \bra{\varphi} \rho_U^B \ket{\varphi} < \frac{1 - \epsilon/2}{d_B}
\biggr]. \label{Eq:1111111}
\end{multline}

Decomposing $\rho^{AB}$ into $\sum_{j=1}^r \lambda_j \ketbra{\psi_j}{\psi_j}^{AB}$, where $\ket{\psi_j}$ are eigenstates and $\lambda_j$ are non-zero eigenvalues, and using the projection $R^{AB} = \sum_{j=1}^r \ketbra{\psi_j}{\psi_j}^{AB}$, it holds for any $\ket{\varphi} \in \cH^B$ that
\begin{align}
    &\bra{\varphi} \rho_U^B \ket{\varphi} \notag \\
    &=
    \sum_{j=1}^r \lambda_j \tr \bigl[ (I^A \otimes \ketbra{\varphi}{\varphi}^B) U^{AB} \ketbra{\psi_j}{\psi_j}^{AB} (U^{AB})^{\dagger} \bigr]\\
    &\geq
    \lambda_{\rm min}(\rho^{AB}) \tr \bigl[ (I^A \otimes \ketbra{\varphi}{\varphi}^B) U^{AB} R^{AB} (U^{AB})^{\dagger} \bigr].
\end{align}
Hence, for any $\ket{\varphi}^B$ and $\epsilon>2 (1 - r \lambda_{\rm min}\bigl(\rho^{AB})\bigr)$, 
\begin{align}
    &{\rm Prob}_{U \sim {\sf H}}\biggl[ \bra{\varphi} \rho_U^B \ket{\varphi}  < \frac{1 - \epsilon/2}{d_B} \biggr] \notag \\
    &\leq 
    {\rm Prob}_{U \sim {\sf H}}\biggl[
    \tr \bigl[ (I^A \otimes \ketbra{\varphi}{\varphi}^B) U^{AB} R^{AB} (U^{AB})^{\dagger} \bigr] \notag \\
    &\hspace{30mm}< \bigl( 1- \delta_{\rho}(\epsilon/2) \bigr)\frac{r}{d_B} \biggr]\\
    &\leq 
    \exp \biggl[ - \frac{d_A r \delta_{\rho}(\epsilon/2)^2}{6} \biggr], \label{Eq:2222222}
\end{align}
where $\delta_{\rho}(x) := 1 - (1- x)/(r \lambda_{\rm min}(\rho^{AB}))$ and the last line follows from~\cref{Lem:CoM}.

From~\cref{Eq:1111111,Eq:2222222}, and using the union bound, we obtain
\begin{multline}
    {\rm Prob}_{\sf H}\biggl[ \lambda_{\rm min} \bigl( \rho_U^{B} \bigr) < \frac{1- \epsilon}{d_B} \biggr] \\
    \leq 
    \biggl( \frac{5 d_B}{\epsilon} \biggr)^{2 d_B}\exp \biggl[ - \frac{d_A r \delta_{\rho}(\epsilon/2)^2}{6} \biggr],
\end{multline}
for any $\epsilon>2 (1 - r \lambda_{\rm min}\bigl(\rho^{AB})\bigr)$, which concludes the proof.
\end{proof}

\begin{proof}[Proof of~\cref{Thm:result3}]

For simplicity, we introduce 
\begin{equation}
    \xi_{j}^{B_{\rm rad} S}(U) = \cU^S \bigl( \ketbra{j_W}{j_W}^A \otimes \xi^{B_{\rm in}B_{\rm rad}} \bigr).
\end{equation}
This is the state after the application of a unitary $U^S$ when the initial state in $A$ is $\ket{j_W}^A$. Since $\xi^{B_{\rm in} B_{\rm rad}}$ is a pure state, so is $\xi_j^{B_{\rm rad} S}(U)$.

Using~\cref{Prop:SC}, the decoding error for the $W$-classical information, when it is decoded by the corresponding pPGM, is given by
\begin{multline}
\Delta_{{\rm cl}, W}(M_{{\rm pPGM},W}|\xi, U) \\
\leq 
\frac{1}{2^k \lambda_{\rm min}} \sum_{i \neq j} \tr \bigl[ \xi_{i}^{B_{\rm rad} S_{\rm rad}}(U)\xi_{j}^{B_{\rm rad} S_{\rm rad}}(U) \bigr], 
\label{Eq:pPGMError}
\end{multline}
where 
\begin{align}
\lambda_{\rm min} &= \min_j \{ \lambda_{\rm min} \bigl(\xi_{j}^{B_{\rm rad} S_{\rm rad}}(U) \bigr) \}\\ 
&= \min_j \{ \lambda_{\rm min} \bigl(\xi_{j}^{S_{\rm in}}(U) \bigr) \}, \label{Eq:minimini}
\end{align}
with $\lambda_{\rm min}(\sigma)$ being the minimum non-zero eigenvalue of $\sigma$. The second equality holds since $\xi_j^{B_{\rm rad}S}$ is a pure state.

To investigate $\lambda_{\rm min} (\xi_{j}^{S_{\rm in}}(U) )$, we apply~\cref{Prop:conc} with the following identification: $A$ and $B$ in~\cref{Prop:conc} corresponds to $S_{\rm rad}$ and $S_{\rm in}$, respectively. Then, for any $\epsilon$ such that $\epsilon \geq 2(1 - r \lambda_{\rm min}(\xi^{B_{\rm in}}))$, where $r = {\rm rank} (\ketbra{j}{j}^A \otimes  \xi^{B_{\rm in}}) = {\rm rank} (\xi^{B_{\rm in}})$, we have $\lambda_{\rm min}(\xi_j^{S_{\rm in}}(U)) < \frac{1-\epsilon}{2^{N+k-\ell}}$ with probability at most
\begin{equation}
P_{\xi}(\epsilon) := \biggl( \frac{5 \cdot 2^{N+k-\ell}}{\epsilon} \biggr)^{2^{N+k-\ell+1}} \exp \biggl[ -\frac{2^{\ell} r c^2}{6} \biggr]. \label{Eq:44vb9sp}
\end{equation}

For $\epsilon$ such that $\epsilon \geq 2(1 - r \lambda_{\rm min}(\xi^{B_{\rm in}}))$, we define $\mathfrak{U}_{\epsilon}$ as
\begin{equation}
\mathfrak{U}_{\epsilon} := \biggl\{ U \in \mathfrak{U}(2^{N+k}) : \lambda_{\rm min} \geq \frac{1- \epsilon}{2^{N+k-\ell}} \biggr\}. \label{Eq:uep}
\end{equation}
Note that $\lambda_{\rm min}$ depends on $U$ (see~\cref{Eq:minimini}).
Using~\cref{Eq:44vb9sp} and the union bound over the choice of $j \in \{0, \dots, 2^k-1\}$, the set $\mathfrak{U}_{\epsilon}$ satisfies
\begin{equation}
{\sf H}(\mathfrak{U}_{\epsilon}) \geq 1 - 2^k P_{\xi}(\epsilon). \label{Eq:ProbLmin}
\end{equation}

We denote by ${\sf H}_{\epsilon}$ and $\bar{{\sf H}}_{\epsilon}$ the probability measures induced from the Haar measure ${\sf H}$ by restriction to $\mathfrak{U}_{\epsilon}$ and to the complementary set of $\mathfrak{U}_{\epsilon}$ in $\mathfrak{U}$, respectively.
Using $\EH = {\sf H}(\mathfrak{U}_{\epsilon}) \mbb{E}_{U \sim {\sf H}_{\epsilon}} + \bigl( 1 - {\sf H}(\mathfrak{U}_{\epsilon}) \bigr) \mbb{E}_{U \sim \bar{\sf H}_{\epsilon}}$, we have
\begin{multline}
\EH \bigl[ \Delta_{{\rm cl}, W}(M_{{\rm pPGM},W}|\xi, U) \bigr] \\
\leq 
{\sf H}(\mathfrak{U}_{\epsilon})  \mbb{E}_{U \sim {\sf H}_{\epsilon}} \bigl[\Delta_{{\rm cl}, W}(M_{{\rm pPGM},W}|\xi, U) \bigr]+ 2^k P_{\xi}(\epsilon),
\end{multline}
where we have used~\cref{Eq:ProbLmin} and that $0\leq \Delta_{{\rm cl}}(M_{\rm pPGM}|\xi, U) \leq 1$.
Further using~\cref{Eq:pPGMError,Eq:uep}, we have
\begin{align}
&{\sf H}(\mathfrak{U}_{\epsilon})  \mbb{E}_{U \sim {\sf H}_{\epsilon}} \bigl[\Delta_{{\rm cl}, W}(M_{{\rm pPGM},W}|\xi, U) \bigr] \notag \\
&\leq 
\frac{2^{N - \ell}}{1- \epsilon} 
{\sf H}(\mathfrak{U}_{\epsilon})  \mbb{E}_{U \sim {\sf H}_{\epsilon}}
 \biggl[ 
\sum_{i \neq j} \tr \bigl[ \xi_{i}^{B_{\rm rad} S_{\rm rad}}(U)\xi_{j}^{B_{\rm rad} S_{\rm rad}}(U) \bigr]
\biggr] \label{Eq:rthispdojv/pr}\\
&\leq 
\frac{2^{N - \ell}}{1- \epsilon} E,
\end{align}
where we defined
\begin{equation}
E:=\mbb{E}_{U \sim {\sf H}}  \biggl[ 
\sum_{i \neq j} \tr \bigl[ \xi_{i}^{B_{\rm rad} S_{\rm rad}}(U)\xi_{j}^{B_{\rm rad} S_{\rm rad}}(U) \bigr]
\biggr]. 
\end{equation}
Note that~\cref{Eq:rthispdojv/pr} follows from the fact that $\lambda_{\rm min} \geq (1-\epsilon)/2^{N+k-\ell}$ in $\mathfrak{U}_{\epsilon}$, on which the probability measure ${\sf H}_{\epsilon}$ is defined.
We, hence, arrive at
\begin{equation}
\EH \bigl[ \Delta_{{\rm cl}, W} (M_{{\rm pPGM},W}|\xi, U) \bigr]
\leq 
\frac{2^{N - \ell}}{1- \epsilon} E + 2^k  P_{\xi}(\epsilon), \label{Eq:c34tpgrme}
\end{equation}
for $\epsilon>2 (1 - r \lambda_{\rm min}\bigl(\xi^{B_{\rm in}})\bigr)$.

In the following, we focus on $E$, which can be computed using the swap trick.
We use the notation that, for a given system $X$, $X'$ denotes a copied system isomorphic to $X$. We also use the swap operator $\mbb{F}^{X, X'} = \sum_{i,j} \ketbra{e_i}{e_j}^X \otimes \ketbra{e_j}{e_i}^{X'}$, where $\{ \ket{e_i} \}_i$ is an orthonormal basis in $X$. A well-known property of the swap operator is that, for two operators $\omega$ and $\gamma$ on $X$, we have $\tr[\omega \gamma] = \tr[ (\omega^X \otimes \gamma^{X'})\mbb{F}^{X, X'}]$.

Using the swap trick in $B_{\rm rad} S_{\rm rad}$, which together we denote by ${\rm rad}$, we have the following identity relation:
\begin{align}
&\tr \bigl[ \xi_{i}^{\rm rad}(U) \xi_{j}^{\rm rad}(U) \bigr] \notag \\
&=
\tr \bigl[ \bigl( \xi_{i}^{\rm rad}(U) \otimes \xi_{j}^{{\rm rad}'}(U) \bigr)  \mbb{F}^{{\rm rad}, {\rm rad}'} \bigr]\\
&=
\tr \bigl[ \bigl( \xi_{i}^{SB_{\rm rad}}(U) \otimes \xi_{j}^{S'B_{\rm rad}'}(U) \bigr)  \bigl( \mbb{I}^{S_{\rm in} S'_{\rm in}} \otimes \mbb{F}^{{\rm rad}, {\rm rad}'}\bigr)  \bigr],
\end{align}
where $\xi_{i}^{SB_{\rm rad}}(U)$ and $\xi_{j}^{S'B_{\rm rad}'}(U)$ are the pure states defined as $\ket{\xi_i}^{SB_{\rm rad}} = U^S \bigl( \ket{i}^A \otimes \ket{\xi}^{B_{\rm in}B_{\rm rad}} \bigr)$ and so on.
Using this, we have
\begin{equation}
    E = \sum_{i \neq j} \tr \bigl[ O_{ij} \bigl( \mbb{I}^{S_{\rm in} S'_{\rm in}} \otimes \mbb{F}^{{\rm rad}, {\rm rad}'}\bigr)  \bigr], \label{Eq:E}
\end{equation}
where we defined an average operator $O_{ij}$ on $SS'B_{\rm rad}B_{\rm rad}'$ as
\begin{equation}
O_{ij} := \mbb{E}_{U \sim {\sf H}} \bigl[ \xi_{i}^{SB_{\rm rad}}(U) \otimes \xi_{j}^{S'B_{\rm rad}'}(U) \bigr]
\end{equation}
for $i\neq j$.

To compute $O_{ij}$, let
\begin{equation}
\ket{\xi}^{B_{\rm in} B_{\rm rad}}
=
\sum_{m} \sqrt{\xi_m} \ket{\varphi_m}^{B_{\rm in}} \otimes \ket{\psi_m}^{B_{\rm rad}},
\end{equation}
be the Schmidt decomposition of the pure state $\ket{\xi}$.
We introduce
\begin{multline}
O_{imn}^{jm'n'}:=\mbb{E}_{U \sim {\sf H}} \bigl[(U^S)^{\otimes 2}  \bigl( \ketbra{j}{j}^A \otimes \ketbra{i}{i}^{A'} \\
\otimes \ketbra{\varphi_m}{\varphi_n}^{B_{\rm in}} \otimes \ketbra{\varphi_{m'}}{\varphi_{n'}}^{B_{\rm in}'} \bigr) (U^{S \dagger})^{\otimes 2} \bigr],
\end{multline}
for $i \neq j$.
This is related to $O_{ij}$ as 
\begin{multline}
    O_{ij} = \sum_{m,n,m',n'} \sqrt{\xi_m \xi_{m'}\xi_n\xi_{n'}} \\
    O_{imn}^{jm'n'} \otimes \ketbra{\psi_m}{\psi_n}^{B_{\rm rad}}\otimes \ketbra{\psi_{m'}}{\psi_{n'}}^{B_{\rm rad}'}. \label{Eq:crrrrrrleee}
\end{multline}

The operator $O_{imn}^{jm'n'}$ commutes with any unitary in the form of $V^S \otimes V^{S'}$ due to the unitary invariance of the Haar measure ${\sf H}$. From the Schur's lemma, this implies that the operator is a linear combination of the projection onto the symmetric subspace and that onto the anti-symmetric subspace. The projections to the former and the latter subspaces are given by $\frac{\mbb{I}^{SS'} + \mbb{F}^{SS'}}{2}$ and $\frac{\mbb{I}^{SS'} - \mbb{F}^{SS'}}{2}$, respectively. Further using $\tr[O_{imn}^{jm'n'}] = \delta_{mn} \delta_{m'n'}$ and $\tr[O_{imn}^{jm'n'} \mbb{F}^{SS'}] = \delta_{ij} \delta_{mn'} \delta_{m'n} = 0$, we have
\begin{equation}
O_{imn}^{jm'n'}
=
\frac{\bigl( d_S  \mbb{I}^{SS'}  - \mbb{F}^{SS'}  \bigr)  \delta_{mn} \delta_{m'n'}}{d_S(d_S^2-1)}, \label{Eq:Ocomp}
\end{equation}
for $i\neq j$, where $d_S = 2^{N+k}$ is the dimension of the system $S$.

Substituting~\cref{Eq:crrrrrrleee,Eq:Ocomp} into~\cref{Eq:E}, it follows that
\begin{equation}
    E =
    2^k(2^k-1)\frac{2^{2(N+k)-\ell}-2^{\ell}}{2^{2(N+k)}-1} \tr\bigl[ (\xi^{B_{\rm rad}})^2 \bigr].
\end{equation}
Note that, since $\xi^{B_{\rm in}B_{\rm rad}}$ is a pure state, $\tr\bigl[ (\xi^{B_{\rm rad}})^2 \bigr] = \tr\bigl[ (\xi^{B_{\rm in}})^2 \bigr]  = 2^{-H_2(B_{\rm in})_{\xi}}$, where $H_2(B_{\rm in})_{\xi}=-\log \tr[(\xi^{B_{\rm in}})^2]$ is the collision entropy of the state $\xi^{B_{\rm in}}$.
Hence,~\cref{Eq:c34tpgrme} is simplified as
\begin{align}
&\EH \bigl[ \Delta_{{\rm cl}, W}(M_{{\rm pPGM},W}|\xi, U) \bigr] \notag \\
&\leq 
\frac{2^{2(\ell_{\rm th}-\ell)}}{1-\epsilon}
\biggl( 1- \frac{1}{2^{k}} \biggr)
\frac{1-2^{-2(N+k-\ell)}}{1-2^{-2(N+k)}} + 2^k P_{\xi}(\epsilon),\\
&\leq 
\frac{2^{2(\ell_{\rm th}-\ell)}}{1-\epsilon} + 2^k P_{\xi}(\epsilon),\label{Eq:nearfin}
\end{align}
where $\ell_{\rm th} = k + \frac{N-H_2(B_{\rm in})_{\xi}}{2}$.

We finally evaluate the second term $2^k P_{\xi}(\epsilon)$. 
We have
\begin{multline}
\log \bigl[2^k P_{\xi}(\epsilon) \bigr]\\
=
k + 2^{N+k-\ell+1} \biggl(N + k - \ell+ \log \frac{5}{\epsilon} \biggr) - \frac{2^{\ell} r c^2}{6} \log e.
\end{multline}
Using $\log r \geq H_2(B_{\rm in})_{\xi}$, it holds that
\begin{multline}
\log \bigl[2^k P_{\xi}(\epsilon) \bigr]
\leq 
k + 2^{N+k-\ell+1} \biggl(N + k - \ell+ \log \frac{5}{\epsilon} \biggr) \\
- \frac{c^2 \log e}{6}2^{\ell + H_2(B_{\rm in})_{\xi}}.
\end{multline}
The right-hand side is exactly the form of $\log \delta$ in the statement.
We hence obtain the desired conclusion:
\begin{equation}
\EH \bigl[ \Delta_{{\rm cl}, W}(M_{{\rm pPGM},W}|\xi, U) \bigr] 
\leq 
\frac{2^{2(\ell_{\rm th}-\ell)}}{1-\epsilon} + \delta.
\end{equation}

The second statement about the average error $\EH[\Delta_{\rm{q}}(\cD_{\rm CtoQ}|\xi, U)]$ on decoding quantum information simply follows by combining the first statement with~\cref{Cor:CtoQMUB}.
\end{proof}

\subsection{Proof of~\cref{Thm:HPformal}} \label{SS:owari}

\cref{Thm:HPformal} can be proved based on~\cref{Thm:result3}. 

\begin{proof}[Proof of~\cref{Thm:HPformal}]
Since the decoding errors are trivially zero when $N_n+k_n-\ell_n = 0$, we below consider the case when $N_n+k_n-\ell_n \geq 1$. In this case,~\cref{Eq:ccccceeer} is rewritten as 
\begin{equation}
    \lim_{n\to\infty} \bigl(k_n + 2(\ell_n - \ell_{\mathrm{th},n} ) - \log(N_n+k_n-\ell_n) \bigr) = \infty. \label{Eq:cmoervre}
\end{equation}

There exists an integer $n_0$ such that $2(1-\Lambda_{\xi_n}) <1$ and $k_n+ 2( \ell_n -\ell_{{\rm th},n}) >0$ for all $n\ge n_0$.
For $n\ge n_0$, we take a sequence $\{\epsilon_n\}$ that satisfies $\epsilon_n \in (2 (1-\Lambda_{\xi_n}), 1]$,
$\lim_{n \rightarrow \infty} \epsilon_n=\epsilon$, and
\begin{equation}
    \lim_{n \rightarrow \infty}
    \frac{\log\log \frac{1}{\epsilon_n}}{k_n +2(\ell_n - \ell_{{\rm th}, n}) - \log(N_n + k_n -\ell_n)}<1. \label{Eq:cni34vrera}
\end{equation}
Note that the denominator diverges when $n \rightarrow \infty$ due to the assumption.
We then apply~\cref{Thm:result3} to obtain
\begin{equation}
    \overline{\Delta_{{\rm cl},n}} \leq L_n + \delta_n \label{Eq:ionowkm}
\end{equation}
for $n\ge n_0$, where $L_n=\frac{2^{2(\ell_{{\rm th},n}-\ell_n)}}{1- \epsilon}$ and $\delta_n$ is given by
\begin{multline}
    \log \delta_n =k_n +2^{N_n +k_n -\ell_n +1}\biggl( N_n +k_n -\ell_n + \log \frac{5}{\epsilon_n} \biggr)\\
    - \frac{\log 2}{24} 2^{\ell_n + H_2(B_{\rm in})_{\xi_n}}.
\end{multline}
Note that we have used an obvious bound $c_n \leq 1/2$ for all $n$.

By taking the limit of~\cref{Eq:ionowkm}, it follows that
\begin{equation}
    \limsup_{n\rightarrow \infty} \biggl\{\frac{\overline{\Delta_{{\rm cl},n}}}{L_n} \biggr\} \leq 1 + \limsup_{n\rightarrow \infty} \frac{\delta_n}{L_n}. \label{Eq:korekorekorekore}
\end{equation}
In the following, we show that $\limsup_{n\rightarrow \infty} \delta_n/L_n = 0$.
We define $p_n$, $q_n$, and $r_n$ by
\begin{align}
    &p_n =\ell_n + H_2(B_{\rm in})_{\xi_n} = N_n + 2(k_n - \ell_{{\rm th},n}),\\
    &q_n =\log \bigl(k_n +2(\ell_n - \ell_{{\rm th}, n}) \bigr),\\
    &r_n =N_n +k_n -\ell_n + \log (N_n +k_n -\ell_n + \log \frac{5}{\epsilon_n}).
\end{align}
Using $p_n$, $q_n$, and $r_n$, we can write $\log \delta_n/L_n$ as
\begin{multline}
    \log \frac{\delta_n}{L_n} =
    2^{q_n} + 2^{1 + r_n} - \frac{c^2 \log 2}{6} 2^{p_n}   +\log(1-\epsilon_n).
\end{multline}
We below show that, when $n  \rightarrow \infty$, 
\begin{align}
    p_n - q_n - r_n &= p_n\bigl( 1 - \frac{q_n}{p_n} \bigr) - r_n \rightarrow \infty. \label{Eq:limtotyu}
\end{align}

It follows that 
\begin{align}
    \frac{q_n}{p_n} 
    &= \frac{\log \bigl(k_n +2(\ell_n - \ell_{{\rm th}, n}) \bigr)}{\ell_n + N_n + 2(k_n - \ell_{{\rm th}, n})}\\
    &\leq 
    \frac{\log \bigl(k_n +2(\ell_n - \ell_{{\rm th}, n}) \bigr)}{k_n + \ell_n - \ell_{{\rm th}, n}}\\
    &\leq 
    2 \frac{\log \bigl(k_n +2(\ell_n - \ell_{{\rm th}, n}) \bigr)}{k_n + 2(\ell_n - \ell_{{\rm th}, n})},
\end{align}
where the second line follows from $N_n + k_n \geq \ell_{{\rm th}, n}$ since $H_2(B_{\rm in})_{\xi_n} \geq 0$, and the last from $k_n \geq 0$.
This converges to zero when $n\rightarrow \infty$ since~\cref{Eq:cmoervre} implies that $k_n + 2(\ell_n - \ell_{{\rm th}, n}) \rightarrow \infty$ when $n\rightarrow \infty$. Hence, we have 
\begin{equation}
    \limsup_{n \rightarrow \infty }\frac{q_n}{p_n} = 0. \label{Eq:-1}
\end{equation}

We also have 
\begin{multline}
    p_n - r_n = k_n + 2(\ell_n - \ell_{{\rm th},n}) \\
    -\log\bigl(N_n +k_n -\ell_n + \log \frac{5}{\epsilon_n}\bigr).
\end{multline}
Since $N_n +k_n -\ell_n \geq 1$, we use $\log(x+y) \leq 1 + \log x  + \log y$ for any $x, y \geq 1$ to obtain
\begin{multline}
    p_n - r_n \geq 
    k_n + 2(\ell_n - \ell_{{\rm th},n}) -\log\bigl(N_n +k_n -\ell_n\bigr) \\
    - \log\log \frac{5}{\epsilon_n} - 1.
\end{multline}
The right-hand side diverges when $n \rightarrow \infty$ since the first three terms diverges due to the assumption (\cref{Eq:cmoervre}) and we have set $\epsilon_n$ such that~\cref{Eq:cni34vrera} is satisfied.
Hence, 
\begin{equation}
    \lim_{n \rightarrow \infty } (p_n - r_n)  = \infty. \label{Eq:-2}
\end{equation}

From~\cref{Eq:-1,Eq:-2}, we obtain~\cref{Eq:limtotyu}, which further implies that $\limsup_{n \rightarrow \infty} \log (\delta_n/L_n ) = -\infty$. Thus,
\begin{equation}
    \limsup_{n \rightarrow \infty} \frac{\delta_n}{L_n} = 0.
\end{equation}
Substituting this into~\cref{Eq:korekorekorekore}, we arrive at~\cref{Eq:ccccc}.

The statement about the error on decoding quantum information,~\cref{Eq:qqqqqqqq}, is similarly obtained.
\end{proof}

\section{Conclusions and outlooks} \label{S:COs}
In this paper, we have begun with a decoding circuit of a CSS code. The circuit is based on the two classical decoders of the linear codes that define the CSS code. 
We have then shown that the decoding circuit can be extended to handle a general QECC by replacing the two classical decoders with decoding measurements of the associated CQ codes. The decoding error is given by the two classical decoding errors of the CQ codes and by the degree of complementarity of the bases that define the CQ codes. 
We have also shown that the constructed decoding circuit, i.e., the C-to-Q decoder, is nearly optimal and capacity-achieving, when the decoding measurements of the CQ codes are suitably chosen. The power of the C-to-Q decoder has been demonstrated by applying it to the decoding problem of the Hayden-Preskill protocol. We have improved the previous results and revealed that the dynamics of black holes may be an optimal encoder for quantum information but is a poor encoder for classical information.

Since QEC is a key technique toward large-scale quantum information processing, we expect that the C-to-Q decoder has a number of applications to large-scale quantum communication, fault-tolerant quantum computation, and quantum cryptography.
This is analogous to the Petz recovery map~\cite{P1986}, which was originally proposed in a special context but later found many applications~\cite{BK2002,BDL2016,HJPW2004,FR2015,PSSY2022} in quantum information tasks as well as in theoretical physics.
It is also interesting to investigate the relations between the C-to-Q decoder and existing general constructions of decoders~\cite{P1986,GLMQW2022,BK2002,BDL2016,K2007, BR2009, T2010, Renes2016,UN2024}.

Apart from applications, it is of fundamental interest to further investigate the role of complementarity in QEC. Our construction of the C-to-Q decoder quantitatively shows that, if two types of classical information defined in two \emph{complementary} bases can be decoded with small errors, so does quantum information. This observation was previously made in an implicit manner~\cite{K2007, BR2009, T2010, Renes2016}.
We conjecture that this is true even if the two bases are not complementary.
More precisely, we expect that, if the two bases do not share a common strict subspace, then decoding classical information defined in such two bases is equivalent to decoding quantum information.
Here, we mean by `two bases do not share a common strict subspace' that any subspace spanned by a strict subset of one basis cannot be spanned by a strict subset of the other basis.
Proving this conjecture will contribute to better understand the relation between QEC and complementarity.

\section*{Acknowledgments}
Y.~N.\ is supported by JSPS KAKENHI Grant Number JP22K03464, by MEXT-JSPS Grant-in-Aid for Transformative Research Areas (A) ``Extreme Universe'' Grant Numbers JP21H05182 and JP21H05183, by JST CREST Grant Number JPMJCR23I3, and by JST PRESTO Grant Numbers JPMJPR1865 and JPMJPR2456.
T. M. is supported by JSPS KAKENHI Grant Number 21J12744.
This work is also supported by JST, Moonshot R$\&$D Grant Number JPMJMS2061.

\bibliography{Bib2}

\appendix

\section{Proof of~\cref{Prop:MESdiamondUnitary}} \label{App:MESdiamondUnitary}

We here prove~\cref{Prop:MESdiamondUnitary}. The statement is as follows.

\MESdiamondUnitary*

\begin{Proof}
    Since $\cM^A_{\rm twir}:=\mbb{E}_U\bigl[ \cD_{U^{\dagger}}^{C \rightarrow A} \circ \cN^{B \rightarrow C} \circ \cE_{U}^{A \rightarrow B}\bigr]$ is Hermitian-preserving, it follows from~\cref{Def:DiamondNorm} that
    \begin{equation}
        \bigl\| \id^A - \cM^A_{\rm twir} \bigr\|_{\diamond}
        =
        \max_{\ket{\psi}^{AR}} \bigl\| \ketbra{\psi}{\psi}^{AR} - \cM^A_{\rm twir}(\ketbra{\psi}{\psi}^{AR}) \bigr\|_1,
    \end{equation}
    where $R$ has the same dimension as $A$.
    In the following, without loss of generality, we consider a discrete unitary $1$-design $\{p_j, U_j\}_j$\footnote{When the probability distribution $\{p_j\}_j$ is not uniform, the ensemble is called a \emph{weighted unitary $1$-design}. Since our proof works both for a unitary $1$-design and a weighted unitary $1$-design, we do not explicitly distinguish them.}.
    Using the triangle inequality and the invariance of the trace norm under unitary, we have
    \begin{multline}
        \bigl\| \id^A - \cM^A_{\rm twir} \bigr\|_{\diamond}\\
        \leq
        \max_{\ket{\psi}^{AR}} \sum_j \bigl\| p_j\bigl(U_j^A \ketbra{\psi}{\psi}^{AR} U_j^{A \dagger} - \cM^A(U_j^A \ketbra{\psi}{\psi}^{AR} U_j^{A \dagger})\bigr) \bigr\|_1, \label{Eq:diamondupperoafwe}
    \end{multline}
    where $\cM^A := \cD^{C \rightarrow A} \circ \cN^{B \rightarrow C} \circ \cE^{A \rightarrow B}$. 
    
    We now use the fact that there exists a pure state $\ket{\Psi}^{ARC}$ on a system $ARC$ and a POVM $\{ \Gamma_j^C \}_j$ on $C$ such that 
    \begin{equation}
        \tr_C \bigl[ \Gamma_j^C \ketbra{\Psi}{\Psi}^{ARC}  \bigr] = p_j U_j^A \ketbra{\psi}{\psi}^{AR} U_j^{A \dagger}.    \label{Eq:awfe;oiregg}
    \end{equation}
    By summing up over $j$, we have
    \begin{equation}
        \Psi^{AR} = \sum_j p_j U_j^A \ketbra{\psi}{\psi}^{AR} U_j^{A \dagger} = \pi^A \otimes \psi^R. 
    \end{equation}
    Here, we used the property $\sum_j \Gamma_j^C = I^C$ of the POVM $\{ \Gamma_j^C\}_j$ and also the fact that $\{p_j, U_j\}_j$ is a unitary $1$-design. Further tracing out $R$, we obtain $\Psi^{A} = \pi^A$.
    Since the maximally entangled state $\ket{\Phi}^{AR}$ has the same marginal state $\Phi^A = \pi^A$, there exists an isometry $V^{R \rightarrow RC}$ such that
    \begin{equation}
        \ket{\Psi}^{ARC} = V^{R \rightarrow RC} \ket{\Phi}^{AR}.  \label{Eq:VPsidia}
    \end{equation}
    
    Consider a quantum channel $\chi^{R\rightarrow RX }$ defined by 
    \begin{multline}
        \chi^{R\rightarrow RX } (\rho^R)\\
        =\sum_j
        \tr_C (\Gamma_j^C V^{R\rightarrow RC } \rho^R V^{R\rightarrow RC \dagger}) \otimes  \ketbra{e_j}{e_j}^X,
    \end{multline}
    where $\{\ket{e_j}^X\}_j$ is an orthonormal basis in an ancillary system $X$. From~\cref{Eq:awfe;oiregg,Eq:VPsidia}, we have
    \begin{equation}
        \chi^{R\rightarrow RX } ( \ketbra{\Phi}{\Phi}^{AR} ) = \sum_j p_j  U_j^A  \ketbra{\psi}{\psi}^{AR} U_j^{A\dagger} \otimes \ketbra{e_j}{e_j}^X.
    \end{equation}
    This leads to
    \begin{widetext}
        \begin{align}
            \sum_j \bigl\|p_j \bigl( U_j \ketbra{\psi}{\psi}^{AR} U_j^{\dagger} - \cM^A(U_j^A \ketbra{\psi}{\psi}^{AR} U_j^{\dagger}) \bigr) \bigr\|_1
            &=   \bigl\| \sum_jp_j \bigl(U_j \ketbra{\psi}{\psi}^{AR} U_j^{\dagger} - \cM^A(U_j^A \ketbra{\psi}{\psi}^{AR} U_j^{\dagger}) \bigr) \otimes \ketbra{e_j}{e_j}^X \bigr\|_1\\
            &=   \bigl\| \chi^{R\rightarrow RX } \bigl(\ketbra{\Phi}{\Phi}^{AR} - M^A (\ketbra{\Phi}{\Phi}^{AR})  \bigr) \bigr\|_1\\
            &\leq
            \bigl\| \ketbra{\Phi}{\Phi}^{AR} - \cM^A(\ketbra{\Phi}{\Phi}^{AR}) \bigr\|_1, \label{Eq:o;awn4g4g4}
        \end{align}
    \end{widetext}
    where the last line follows due to the monotonicity of the trace distance under a quantum channel $\chi^{R\rightarrow RX }$.

    As~\cref{Eq:o;awn4g4g4} holds for any $\ket{\psi}^{AR}$, together with~\cref{Eq:diamondupperoafwe}, it holds that
    \begin{equation}
        \bigl\| \id^A - \cM^A_{\rm twir} \bigr\|_{\diamond}
        \leq
        \bigl\| \ketbra{\Phi}{\Phi}^{AR} - \cM^A(\ketbra{\Phi}{\Phi}^{AR}) \bigr\|_1.
    \end{equation}
    This is the statement of~\cref{Prop:MESdiamondUnitary} as $\cM^A_{\rm twir}=\mbb{E}_U\bigl[ \cD_{U^{\dagger}}^{C \rightarrow A} \circ \cN^{B \rightarrow C} \circ \cE_{U}^{A \rightarrow B}\bigr]$ and $\cM^A = \cD^{C \rightarrow A} \circ \cN^{B \rightarrow C} \circ \cE^{A \rightarrow B}$.    $\hfill \qed$
\end{Proof}

\section{The C-to-Q decoder for a CSS code} \label{App:CSS}
In this section, we explain how an $[[N,k]]$ CSS code can be decoded by the circuit proposed in~\cref{SSS:DecodeCSS}.

Let $C_1$ and $C_2$ be classical linear $[[N,k_1]]$-code and $[[N,k_2]]$-code, respectively, with $k_1-k_2=k$ and $C_2\subset C_1$.  Let $G_1$ and $G_2$ be generator matrices and $H_1$ and $H_2$ be parity-check matrices for the codes $C_1$ and $C_2$, respectively, all of which are binary matrices with $N$ columns (we will thus use the row vector convention).  Due to the condition $C_2\subset C_1$, we have $G_2 H_1^{\rm t}=0$, where the superscript ${\rm t}$ denotes the transposition.  It is then possible to take $G_1$ and $H_2$ as $G_1^{\rm t} = \begin{pmatrix}G_2^{\rm t} & G^{\rm t}\end{pmatrix}$ and $H_2^{\rm t} = \begin{pmatrix}H_1^{\rm t} & H^{\rm t}\end{pmatrix}$ such that $GH^{\rm t}=I_k$ holds.  (The reason why we can take these is as follows.  Since no nonzero linear combination of rows of $G$ belongs to $C_2$, $G H_2^{\rm t}=\begin{pmatrix}0 & GH^{\rm t}\end{pmatrix}$ must be full rank.  Thus, redefining $G$ as $(GH^{\rm t})^{-1}G$ will satisfy the condition.)  
Using these matrices, we can write an orthonormal basis $\{\ket{\bar{j}_Z}\}_{j=1,\ldots ,k}$ of the CSS code from 
$C_1$ and $C_2$ as the $N$-qubit states given by
\begin{equation}
    \ket{\bar{j}_Z}=2^{-k_2/2}\sum_{w\in\{0,1\}^{k_2}}\ket{(j G + w G_2)_Z}, \label{eq:encoded_state}
\end{equation}
which we call the logical $Z$ basis and corresponds to the basis $E$ in~\cref{SSS:a}.

Under the presence of an error in the $Z$ basis, represented by an $N$-bit string $e_1$, consider a classical decoder $f_1(z)$ for identifying the $Z$ basis index $j$ from an $N$-bit string $z=jG+wG_2+e_1$ of the $Z$-basis measurement outcome. Assuming that no knowledge is available on the correlation between the error $e_1$ and the input value $jG+wG_2$, the random sequence $w$ does not help in identifying $j$.  Hence, we may assume that the decoder ignores the $wG_2$ term, namely, 
\begin{equation}
    f_1(z+w G_2) = f_1(z).
\end{equation}
We further assume that the decoder is symmetric over the change in the encoded value $j$, namely,
\begin{equation}
    f_1(z+wG_2+jG) = f_1(z) + j.
\end{equation}
Then, an error $e_1$ is perfectly correctable by the decoder $f_1(z)$ if and only if $f_1(e_1)=0$ since $f_1(jG+wG_2+e_1)=j+f_1(e_1)$.

The logical $X$ basis $\{\ket{\bar{l}_X}\}_{l=1,\ldots,k}$ corresponding to the $F$ basis in~\cref{SSS:a} is defined as
\begin{align}
    \ket{\bar{l}_X} &:= 2^{-k/2}\sum_{j \in\{0,1\}^k}(-1)^{j l^{\rm t}}\ket{\bar{j}_Z} \\
    &= 2^{(N-k_1)/2}\sum_{x:\,x G^{\rm t}=l,\,x G_2^{\rm t}=0}\ket{x_X}, \\
    &= 2^{(N-k_1)/2}\sum_{v\in\{0,1\}^{N-k_1}} \ket{(l H + v H_1)_X}
\end{align}
where $\ket{x_X}$ is an $X$-basis $N$-qubit state given by
\begin{equation}
    \ket{x_X}=2^{-N/2}\sum_{z\in\{0,1\}^N}(-1)^{zx^{\rm t}}\ket{z_Z}.
\end{equation}
We similarly consider a classical decoder for identifying the $X$ basis index $l$ represented by $f_2(x)$, which satisfies $f_2(x+v H_1+lH)=f_2(x) + l$.  An $X$-basis error $e_2$ is perfectly correctable by the decoder $f_2(x)$ if and only if $f_2(e_2)=0$. 

We now turn to check that the circuit depicted in~\cref{Fig:CSSCodes} actually works as a decoder for this CSS code.  Let $\ket{\bar{\psi}}:=\sum_j \alpha_j \ket{\bar{j}_Z}$ be an arbitrary encoded state of this CSS code and assume that a Pauli error $X^{e_1}Z^{e_2}$ has occurred on this encoded state, where $X^y:=X^{y_1}\otimes \cdots \otimes X^{y_N}$ for an $N$-bit string $y=(y_1,\ldots,y_N)$. 
Then, by applying the isometry $R_Z$ given in~\cref{eq:z_basis_isometry}, we have
\begin{align}
    &R_Z X^{e_1}Z^{e_2}\ket{\bar{\psi}} \nonumber \\
    \begin{split}
    &= 2^{-k_2/2}\sum_{j\in\{0,1\}^k}\alpha_j \sum_{w\in\{0,1\}^{k_2}}(-1)^{(jG+wG_2)e_2^{\rm t}} \\
    & \hspace{3cm}\ket{(jG + wG_2 + e_1)_Z}\ket{(j+f_1(e_1))_Z} 
    \end{split}\\
    &= 2^{(N-k_2)/2}\sum_{j}\alpha_j \sum_{x:\,xG_2^{\rm t} =e_2 G_2^{\rm t}} (-1)^{jG(x^{\rm t} + e_2^{\rm t})+e_1x^{\rm t}}\nonumber\\
    & \hspace{4.5cm} \ket{x_X}\ket{(j+f_1(e_1))_Z}.
\end{align}
We then apply the $X$-basis measurement on the $N$-qubit system to obtain an outcome string $x$ and perform a correction operation $Z^{f_2(x)}$ to the $k$-qubit system as depicted in~\cref{Fig:CSSCodes}. Since the outcome $x$ satisfies $xG_2^{\rm t} = e_2 G_2^{\rm t}$, it is written in the form of $x = lH+vH_1+e_2$. The state after the error correction is then given by
\begin{align}
    &\sum_{j}\alpha_j (-1)^{jG(x^{\rm t} + e_2^{\rm t})+e_1x^{\rm t}} Z^{f_2(x)}\ket{(j+f_1(e_1))_Z} \\
    &= \sum_{j}\alpha_j (-1)^{jl^{\rm t} + e_1 x^{\rm t}+ (j+f_1(e_1)) (l+f_2(e_2))^{\rm t}}\ket{(j+f_1(e_1))_Z} \\
    &= (-1)^{e_1 x^{\rm t}+ f_1(e_1) l^{\rm t}}Z^{f_2(e_2)}X^{f_1(e_1)}\ket{\psi}.
\end{align}
Therefore, the quantum decoder succeeds if $f_1(e_1)=0$ and $f_2(e_2)=0$ are satisfied, i.e., if classical decoders for $Z$ and $X$ succeed, which is what we expected.

The decoding error can be easily obtained in this case.  Let $\Delta_{{\rm cl}, Z}$ and $\Delta_{{\rm cl}, X}$ be the failure probability of the classical decoders $f_1$ and $f_2$, respectively.
Then, the fidelity between the original maximally entangled state $\ket{\Phi}$ and the output of the noisy channel followed by the decoder in~\cref{Fig:CSSCodes} is no smaller than $1 - \Delta_{{\rm cl}, Z} - \Delta_{{\rm cl}, X}$ since the output state is $\ket{\Phi}$ when $f_1(e_1)=f_2(e_2)=0$ and it is orthogonal to $\ket{\Phi}$ otherwise.  Thus, when measured in the trace distance, the decoding error is bounded from the above by $\sqrt{\Delta_{{\rm cl}, Z} + \Delta_{{\rm cl}, X}}$ as claimed in~\cref{eq:error_for_css}.  

This error bound is better than the one in~\cref{Cor:CtoQMUB} for the following reason. 
Since CSS codes are designed so that the $Z$- and $X$-error correction can be done independently, the decoding error of the $X$-error correction is independent of whether or not the preceding $Z$-error correction succeeds.  
This is not the case for general codes: decoding errors for $Z$ and $X$ may correlate, and thus we need to use a trick to decouple these correlations, which results in a looser bound.

\section{Derivation of~\cref{Eq:koregeri}} \label{App:last}
In this section, we provide an upper bound of $\Xi_{EF}$, which is defined as
\begin{align}
\Xi_{EF} =  1 - \sum_{l=0}^{d-1} \tr \bigl[\cT^{A \rightarrow C}(\pi^A) M_{F, l}\bigl] F_{BC}\bigl({\rm unif}_d, p_l\bigr),
\end{align}
where $\cT^{A \rightarrow C} = \cN^{B\rightarrow C} \circ \cE^{A \rightarrow B}$.

For simplicity, we denote $F_{BC}\bigl({\rm unif}_d, p_l\bigr)$ by $F_l$, and $\tr[\cT^{A \rightarrow C}(\ketbra{m_F}{m_F}^A)M_{F, l}]$ by $q(l,m)$. 
Since $\pi$ is the completely mixed state, we have
\begin{align}
    \Xi_{EF} &= 1 - \frac{1}{d}\sum_{l,m} q(l,m) F_l\\
    &\leq 1 - \frac{1}{d}\sum_{l,m} q(l,m) F_m - \frac{1}{d}\sum_{l,m} q(l,m) (F_l-F_m). \label{Eq:crbrrt}
\end{align}
Using the facts that $\sum_l q(l,m) = 1$ for any $m$ and that
\begin{align}
    \sum_{l,m} q(l,m) (F_l-F_m)
    &=
    \sum_{l\neq m} q(l,m) (F_l-F_m)\\
    &\geq
    (F_{BC, \min}-F_{BC, \max}) \sum_{l\neq m} q(l,m) \\
    &=(F_{BC, \min}-F_{BC, \max}) d \Delta_{{\rm cl}, F},
\end{align}
where $F_{BC, \max} = \max_l F_{BC}\bigl({\rm unif}_d, p_l\bigr)$ and $F_{BC, \min} = \min_l F_{BC}\bigl({\rm unif}_d, p_l\bigr)$,
we obtain
\begin{align}
    \Xi_{EF} 
    &\leq 1 - \frac{1}{d}\sum_{m} F_m + (F_{BC, \max}-F_{BC, \min}) \Delta_{{\rm cl}, F}.
\end{align}

\section{Derivation of~\cref{Eq:HN}} \label{App:Holevo}

We here derive~\cref{Eq:HN}, namely, 
\begin{align}
&\Delta_{{\rm cl}, Z} (M_{\rm pPGM}|\cN \circ \cE) \leq \frac{1}{d} \sum_{i \neq j} \tr \bigl[ \Pi_i^C \tau_j^C \bigr].
\end{align}
Here, $M_{\rm pPGM}$ is a pPGM given by
\begin{equation}
    M_{\rm pPGM} = \bigl\{ M_j = \Pi^{-1/2} \Pi_j \Pi^{-1/2} \bigr\}_j,
\end{equation}
where $\Pi_j$ is a projection onto the support of $\tau_j^C := \cT^{A \rightarrow C}(\ketbra{j}{j}^A)$ with $\cT^{A \rightarrow C}$ being $\cN^{B\rightarrow C} \circ \cE^{A \rightarrow B}$, and $\Pi := \sum_j \Pi_j$. 
In the following, we omit the superscript $C$.

The proof is based on the same idea as~\cite{H1998,SW1998} except that we do not consider the asymptotic situation. For this reason, we need a slight modification.

Using a diagonalization of $\tau_j$ such as $\tau_j = \sum_{z = 0}^{t_j-1} \lambda_z^{(j)} \ketbra{\varphi_z^{(j)}}{\varphi_z^{(j)}}$, where $t_j = {\rm rank} \tau_j$, 
the projection $\Pi_j$ is given by $\tau_j = \sum_{z = 0}^{t_j-1} \ketbra{\varphi_z^{(j)}}{\varphi_z^{(j)}}$, which leads to 
\begin{equation}
    \tr \bigl[ M_j \tau_j \bigr]
    =
    \sum_{z,w=0}^{t_j-1} \lambda_z^{(j)} \bigl| \bra{\varphi^{(j)}_z} \Pi^{-1/2} \ket{\varphi_w^{(j)}} \bigr|^2.
\end{equation}
Since $\Delta_{{\rm cl}, Z} (M_{\rm pPGM}|\cN \circ \cE) = 1 -  d^{-1}\sum_j \tr [ M_j \tau_j ]$, we have
\begin{align}
    &\Delta_{{\rm cl}, Z} (M_{\rm pPGM}|\cN \circ \cE) \notag \\
    &=
    \frac{1}{d} \sum_{j=0}^{d-1} \sum_{z=0}^{t_j-1} \lambda_z^{(j)} \biggl( 1 -  \sum_{w=0}^{t_j-1} \bigl| \bra{\varphi^{(j)}_z} \Pi^{-1/2} \ket{\varphi_w^{(j)}} \bigr|^2 \biggr),\\
    & \leq
    \frac{1}{d} \sum_{j=0}^{d-1} \sum_{z=0}^{t_j-1} \lambda_z^{(j)} \biggl( 1 - \bigl| \bra{\varphi^{(j)}_z} \Pi^{-1/2} \ket{\varphi_z^{(j)}} \bigr|^2 \biggr),\\
    & \leq
    \frac{2}{d} \sum_{j=0}^{d-1} \sum_{z=0}^{t_j-1} \lambda_z^{(j)} \biggl( 1 - \bra{\varphi^{(j)}_z} \Pi^{-1/2} \ket{\varphi_z^{(j)}}  \biggr),
\end{align}
where the first equality is due to $\sum_{z=0}^{t_j-1} \lambda_z^{(j)}=1$ for any $j$, and the last line to $1-x^2 \leq 2(1-x)$.

Define two matrices, $\Gamma$ and $\Lambda$, whose matrix elements are given by
\begin{align}
    &\Gamma_{(j,z), (i,w)} := \braket{\varphi_z^{(j)}}{\varphi_w^{(i)}},\\ &\Lambda_{(j,z), (i,w)} := \delta_{(j,z), (i,w)} \lambda_z^{(j)}.
\end{align}
Note that they are both positive semi-definite.
By a direct calculation, it is straightforward to check that $\bigl(\Gamma^{1/2} \bigr)_{(j,z), (i,w)} = \bra{\varphi_z^{(j)}} \Pi^{-1/2} \ket{\varphi_w^{(i)}}$. Hence, it follows that
\begin{align}
    \Delta_{{\rm cl}, Z} (M_{\rm pPGM}|\cN \circ \cE)
    &\leq
    \frac{2}{d} \tr \bigl[ \Lambda (I - \Gamma^{1/2}) \bigr].
\end{align}
Using the matrix inequality $2(I - \Gamma^{1/2}) \leq 2I - 3 \Gamma + \Gamma^2$, which follows from the inequality of the same form for non-negative real number $x \geq 0$, we obtain
\begin{align}
    &\Delta_{{\rm cl}, Z} (M_{\rm pPGM}|\cN \circ \cE) \notag \\
    &\leq
    \frac{1}{d} \tr \bigl[ \Lambda ( 2I - 3 \Gamma + \Gamma^2) \bigr]\\
    &=
    \frac{1}{d} \sum_{j=0}^{d-1} \sum_{z=0}^{t_j-1} \lambda^{(j)}_z \biggl( 2 - 3 \Gamma_{(j, z), (j, z)}  + \sum_{i=0}^{d-1} \sum_{w=0}^{t_i-1} |\Gamma_{(j, z), (i, w)}|^2  \biggr)\\
    &=
    \frac{1}{d} \sum_{j \neq i} \sum_{z, w} \lambda^{(j)}_z  |\Gamma_{(j, z), (i, w)}|^2\\
    &=
    \frac{1}{d} \sum_{j \neq i} \sum_{z, w} \lambda^{(j)}_z  | \braket{\varphi_z^{(j)}}{\varphi_w^{(i)}}|^2\\
    &=
    \frac{1}{d} \sum_{j \neq i} \tr [\Pi_i \tau_j],
\end{align}
where the third last line follows from the fact that $\Gamma_{(j,z),(j,w)} = \delta_{zw}$. \\

We finally comment on the fact that a similar statement follows from the Hayashi-Nagaoka's lemma~\cite{HN2003}, which states that $I - M_j \leq 2(I - \Pi_j) + 4 \sum_{i (\neq j)} \Pi_j$. Applying this, we immediately have
\begin{align}
    &\Delta_{{\rm cl}} (M_{\rm pPGM}|\cN \circ \cE) \leq \frac{4}{d} \sum_{i \neq j} \tr \bigl[ \Pi_i \tau_j \bigr], 
\end{align}
where we have used $\tr [\Pi_j \tau_j] = 1$. This is slightly looser than~\cref{Eq:HN} by a factor of $4$.

\end{document}